\theoremstyle{plain}
\newtheorem{theorem}{Theorem}[section]
\newtheorem{proposition}[theorem]{Proposition}
\newtheorem{lemma}[theorem]{Lemma}
\newtheorem{corollary}[theorem]{Corollary}
\theoremstyle{definition}
\newtheorem{assumption}{Assumption}[section]
\theoremstyle{remark}
\newenvironment{example}
  {\pushQED{\qed}\examplex}
  {\popQED\endexamplex}
\newenvironment{remark}
  {\pushQED{\qed}\remarkx}
  {\popQED\endexamplex}
\newcommand{\E}[2][]{\mathrm{E}_{#1}\left[ #2 \right]}
\newcommand{\En}[1]{\mathbb{E}_{n}\left[ #1 \right]}
\newcommand{\PP}[1]{\mathrm{P}_P\left( #1 \right)}
\renewcommand{\P}[1]{\mathrm{P} \left( #1 \right)}
\newcommand{\thh}{\hat{\theta}}
\newcommand{\tbb}{\Bar{\theta}}
\newcommand{\ehh}{\hat{\eta}}
\newcommand{\ghh}{\hat{\gamma}}
\newcommand{\ahh}{\hat{\alpha}}
\newcommand{\shh}{\hat{\sigma}}
\newcommand{\Gn}{\mathbb{G}_n}
\newcommand{\F}{\mathcal{F}}
\newcommand{\TT}{\mathcal{T}}
\newcommand{\R}{\mathcal{R}_n}
\newcommand{\RR}{\mathbb{R}}
\newcommand{\U}{\mathcal{U}}
\title{\scshape Finite-Sample Guarantees \\ for High-Dimensional DML}
\author{Víctor Quintas-Martínez}
\date{\today}
\begin{document}
\renewcommand{\abstractname}{\vspace{-1em}}

\maketitle

\begin{abstract}
    \textsc{Abstract} \quad Debiased machine learning (DML) offers an attractive way to estimate treatment effects in observational settings, where identification of causal parameters requires a conditional independence or unconfoundedness assumption, since it allows to control flexibly for a potentially very large number of covariates. This paper gives novel finite-sample guarantees for joint inference on high-dimensional DML, bounding how far the finite-sample distribution of the estimator is from its asymptotic Gaussian approximation. These guarantees are useful to applied researchers, as they are informative about how far off the coverage of joint confidence bands can be from the nominal level. There are many settings where high-dimensional causal parameters may be of interest, such as the ATE of many treatment profiles, or the ATE of a treatment on many outcomes. We also cover infinite-dimensional parameters, such as impacts on the entire marginal distribution of potential outcomes. The finite-sample guarantees in this paper complement the existing results on consistency and asymptotic normality of DML estimators, which are either asymptotic or treat only the one-dimensional case.
\end{abstract}

\section{Introduction}
A recent strand of literature in econometrics has considered estimation of treatment effects and causal or structural parameters using machine learning (ML) methods \citep{chernozhukov2018automatic, belloni2017program,  athey2019machine, farrell2021deep}. In many observational settings, the treatment or policy whose impact we wish to quantify was not randomly assigned, so a simple comparison of treatment and control groups is confounded by factors that are correlated both with the outcome and with the treatment. We may still be able to identify the causal effect of the treatment, however, if we are willing to make an unconfoundedness assumption, i.e., that the treatment is exogenous conditional on an appropriate set of controls. There are several ways in which ML can be useful in such quasi-experimental research designs. On the one hand, it allows to control flexibly for a large number of covariates, as are typically available in modern datasets. On the other hand, it opens a wide range of possibilities in terms of which types of data can be used as controls, e.g., textual or image data.\footnote{This is increasingly relevant in applied economics. For example, see \citet{dube2020monopsony}, who use data from job ads in Amazon MTurk to study whether online employers have monopsonistic labor market power, controlling for job characteristics in the form of features learned by a random forest trained on the ad descriptions. Similarly, an IGC project by \citet{olken2017evidence}, uses Google Street View images to infer J-PAL survey responses in Indonesia; one could imagine using those same images as a proxy for socio-demographic controls.}

The first section of this paper considers inference on a set of parameters that can be expressed as averages of a functional,
\[\theta_{0j} = \E[P]{m_j(W, \gamma_{0j})}, \quad j = 1, \, \ldots, \, p, \]
where $\gamma_{0j}$ is an infinite-dimensional nuisance parameter (for example, a conditional expectation or regression function), and $p$ is potentially large. This setting encompasses, for example, joint inference on the effects of many different treatments or on the impact of a treatment on many different outcomes.

Modern ML methods perform very well in predictive settings, typically by trading off variance and bias through some form of explicit or implicit regularization (and so they offer an attractive methodology to estimate $\gamma_{0j}$ when it is some form of ``best predictor''). At the same time, that trade-off means that their convergence rates are slower than the parametric $\sqrt{n}$-rate, causing a first-order bias in the estimation of the target parameters $\theta_{0j}$ that does not, in general, vanish asymptotically. \citet{chernozhukov2018double} and subsequent work show how to construct estimators of $\theta_{0j}$ that are correctly centered asymptotically by ``debiasing'' the moment conditions above, making them first-order robust\footnote{More precisely, Neyman-orthogonal, as defined below.} to the ML estimation error. These are known as double or debiased machine learning (DML) estimators.

An applied research may be interested in conducting joint inference about a high-dimensional parameter $\{\theta_{0j}\}_{j=1, \, \ldots, \, p}$. That can be done by constructing simultaneous confidence bands that cover the true parameter with a pre-specified probability (approximately) $1-\alpha$. Based on DML estimators $\{\thh_{j}\}_{j = 1, \, \ldots, p}$ of $\{\theta_{0j}\}_{j = 1, \, \ldots, p}$, and an estimator $\shh_{j}^2$ of $\sigma_j^2 = \mathrm{Var}(\sqrt{n}(\thh_{j} - \theta_{0j}))$ for each $j = 1,\ldots,p$, we will consider a joint confidence band of the form $\times_{j=1}^p [\thh_j \mp n^{-1/2}\shh_jc_{\alpha}]$, where the critical value $c_{\alpha}$ is chosen so that the sup-$t$-statistic satisfies: \[\PP{\max_{1 \leq j \leq p} \sqrt{n}\frac{|\thh_{j} - \theta_{0j}|}{\shh_j} \leq c_\alpha} \approx 1 - \alpha.\] One way to choose $c_\alpha$ is to rely on the joint asymptotic normality of the $t$-statistics, which is a well-established fact in the literature (\citealp{chernozhukov2018double, chernozhukov2018automatic, chernozhukov2021automatic}), even in the high-dimensional case and for continua of parameters \citep{belloni2017program, belloni2018uniformly}.

The goal of this paper is to provide finite-sample guarantees for the normal approximation above, in the form of a bound on the Kolmogorov distance between the finite-sample distribution of the sup-$t$-statistic and $\max_{1 \leq j \leq p} Z_j$ for a suitable jointly Gaussian distribution $(Z_1, \ldots, Z_p) \sim \mathcal{N}(0, \Sigma)$, i.e.: \[\sup_{t \in \RR} \left| \PP{\max_{1 \leq j \leq p} \sqrt{n}\frac{(\thh_j - \theta_{0j})}{\sigma_j} \leq t} - \P{\max_{1 \leq j \leq p} Z_j \leq t}\right|.\] The dependence of the bound on $n$ and $p$ will be made explicit. These guarantees are useful to applied researchers, as they are informative about how far off the coverage of joint confidence bands can be from the nominal level for a given sample size and complexity of the problem. The closest existing result to ours is \citet{chernozhukov2021simple}, who provide finite-sample guarantees in the single-parameter case, $p = 1$, when using pure sample splitting.\footnote{A precise definition of what we mean by sample splitting will be provided in \cref{rem:entropy}.} We extend their results in two ways, allowing for a potentially large $p$ and for no sample splitting. To that end, we leverage results from the literature on high-dimensional estimation, including a maximal inequality for empirical processes \citep{chernozhukov2014gaussian} and new normal approximation results for high-dimensional vectors \citep{chernozhukov2021nearly}.

In the second section of this paper, we consider a more general moment problem,
\[\E[P]{\psi_u(W, \theta_{0u}, \eta_{0u})} = 0, \quad u \in \U,\] where $\U$ is potentially an uncountable set. In the continuum of target parameters case, the $\{\theta_{0u}\}_{u \in \U}$ could represent, for example, the marginal distribution of an outcome under a treatment, which allows to derive many other interesting statistics (e.g., quantile treatment effects, Gini coefficients, Oaxaca-Blinder decompositions of distributional shifts, etc.). Again, we wish to construct a simultaneous confidence band for $\{\theta_{0u}\}_{u \in \U}$ using DML, based on a normal approximation to the sup-$t$-statistic. We provide finite-sample guarantees for that normal approximation also in this setting. To the best of our knowledge, ours is the first paper to provide non-asymptotic guarantees for the DML estimator of a continuum of parameters, which extend and complement the asymptotic results of \citet{belloni2017program, belloni2018uniformly}.

\paragraph{Notation} Throughout the paper, we use the following notation. For a random variable $W \in \mathcal{W}$, distributed according given probability measure $P$ on $\mathcal{W}$, we denote by $\E[P]{\cdot}$ the expectation with respect to $P$, i.e., $\E[P]{f(W)} = \int f(w) \mathrm{d}P(w)$ for a suitably measurable and integrable $f$. We denote by $\En{\cdot}$ the average of a sample $\{W_i\}_{i=1}^n$ of size $n$, i.e., $\En{f(W)} = n^{-1}\sum_{i=1}^n f(W_i)$. We use $\Gn [\cdot] = \mathbb{G}_{n,p} [\cdot]$ for an empirical process $\sqrt{n}(\mathbb{E}_n[\cdot] - \mathrm{E}_P[\cdot])$ over a class $\mathcal{F}$ of suitable measurable and integrable functions $f : \mathcal{W} \rightarrow \RR$, i.e., \[\Gn[f] = \Gn[f(W)] = \frac{1}{\sqrt{n}}\sum_{i=1}^n \left(f(W_i) - \E[P]{f(W)} \right).\]
We denote by $L^q = L^q(P)$ the space of functions with finite $q$-th absolute moments with respect to $P$, $\{f: \mathcal{W} \rightarrow \RR \, : \, \E[P]{|f(W)|^q} < \infty\}$. For $f \in L^q(P)$, we denote by $\Vert f \Vert_{P,q} = (\E[P]{|f(W)|^q})^{1/q}$, the $L^q$ norm. For a bounded function $f : \mathcal{W} \rightarrow \RR$, we denote by $\Vert \cdot \Vert_{\infty}$ the sup norm, i.e., $\Vert f \Vert_{\infty} = \sup_{w \in \mathcal{W}} |f(w)|$.

For a function $f : \RR \rightarrow \RR$, we denote by $\partial_r f(r) = f'(r)$ the derivative with respect to $r$. For a functional $F : \F \rightarrow \RR$ over a class of functions $\F$, we define the Gateaux derivative of $F$ at $f$ in the direction $u \in \F$ as $\partial_r F(f + r u) |_{r = 0}$. We say that $F$ is Gateaux differential at $F$ if that derivative exists for all $u \in \F$.

For a function class $\F$ endowed with a norm $\Vert \cdot \Vert_{\F}$, and for any $\varepsilon > 0$, we define the covering number $N(\varepsilon, \F, \Vert \cdot \Vert_{\F})$ as the smallest number of closed balls with radius $\varepsilon$ that could cover $\F$. Denote by $F$ a measurable envelope for $\F$, i.e., a function such that $F \geq \sup_{f\in \F}|f|$. The uniform entropy number is, for any $\varepsilon > 0$,  $\log \sup_Q N(\varepsilon, \F, \Vert \cdot \Vert_{Q,2})$, where the supremum is taken over any finitely discrete probability measure $Q$ such that $\Vert F \Vert_{Q,2} > 0$.

\section{Averages of Many Linear Functionals}\label{sec:lin}
In this section we extend the results of \citet{chernozhukov2021simple} to the high-dimensional case. Suppose we have access to an i.i.d. sample $\{(Y_i, W_i)\}_{i=1}^n$ from a probability distribution $P$, where $Y \in \mathcal{Y}$ denotes outcomes of interest and $W \in \mathcal{W}$ are other observed data. Our goal is to construct a simultaneous confidence band for the set of (scalar) parameters $\{\theta_{0j}\}_{j=1, \, \ldots, \, p}$, satisfying the following moment condition:
\begin{equation}
    \theta_{0j} = \E[P]{m_j(W, \gamma_{0j})}, \quad j = 1, \, \ldots, \, p, \label{eq:moment}
\end{equation}
where the moment functional $m_j(\cdot, \cdot)$ may depend on the unknown true value of an infinite-dimensional nuisance parameter $\gamma_{0j} \in \Gamma$, e.g., a conditional expectation. The set $\Gamma \subset L^2$ is assumed to be a linear function space, and can be used to encode restrictions on $\gamma_{0j}$, such as smoothness \citep{chernozhukov2021simple}. We assume that an ML estimator $\ghh_j$ of $\gamma_{0j}$ can be obtained from the same data.

Below we give some examples where this setting may apply.

\begin{example}[ATE of many treatments]\label{ex:1}
Researchers may be interested in estimating the average treatment effect (ATE) of many different treatments, combinations of treatments or dosages. In observational settings, where treatments are not randomly assigned, ATEs may still be identified under an unconfoundedness assumption \citep{rosenbaum1983central}.

Let $D \in \mathcal{D}$ denote the treatment variable, where $\mathcal{D} = \{d_0, \ldots, d_p\}$ is the set of different treatment profiles, with $d_0$ denoting no treatment. Let $Y(d)$ for $d \in \mathcal{D}$ denote potential outcomes, so that the observed outcome is $Y = Y(D)$. Suppose that the researcher has access to a set of control variables $X$ such that $D$ is independent of $Y(d)$ given $X$ for all $d \in \mathcal{D}$. In that case, $\E[P]{Y(d)} = \E[P]{\gamma_0(d_j, X)}$, where $\gamma_0(d, x) = \E[P]{Y \mid D = d, X = x}$. (Notice that, with this formulation, $\gamma_0$ does not depend on $j$.) The ATE of treatment profile $d_j$ with respect to no treatment is: \[\theta_{0j} = \E[P]{m_j(W, \gamma_0)}, \quad m_j(w, \gamma_0) = \gamma_0(d_j, x) - \gamma_0(d_0, x), \quad j = 1, \, \ldots, p.\]
In modern datasets, the number of available controls $X$ may be large, so that ML methods may be especially well suited to estimate the nuisance parameter $\gamma_{0j}$.
\end{example}

\begin{example}[ATE on many outcomes]\label{ex:2}
Our framework also provides guarantees for uniform inference on the ATE of a binary treatment $D \in \{0,1\}$ on a large set of outcomes $Y = (Y_1, \ldots, Y_p)$ using the sup-$t$-statistic. In this case, under the appropriate unconfoundedness assumption, we have: \[\theta_{0j} = \E[P]{m_j(W, \gamma_{0j})}, \quad m_j(w, \gamma_{0j}) = \gamma_{0j}(1, x) - \gamma_{0j}(0, x), \quad j = 1, \, \ldots, p,\] where now $\gamma_{0j}(d,x) = \E[P]{Y_j \mid D = d, X = x}$.
\end{example}

\begin{example}[Policy optimization]\label{ex:3}
Consider again a binary treatment $D \in \{0,1\}$ and an outcome of interest $Y$. Policymakers may wish to select the best treatment assignment policy based on a set of characteristics $X$, where a treatment assignment policy is a mapping $\pi : X \rightarrow \{0,1\}$. Suppose we only have access to observational data, collected under an unknown treatment policy. We want to evaluate and compare the effects of an alternative set of candidate policies $\{\pi_1, \ldots, \pi_p\}$, where the average effect of policy $\pi_j$ is: \[\theta_{0j} = \E[P]{m_j(W,\gamma_0)}, \quad m_j(w, \gamma_{0j}) = \gamma_{0}(0, x) + \pi_j(X)(\gamma_{0}(1, x) - \gamma_{0}(0, x) ), \quad j = 1, \, \ldots, p.\] Here, $\gamma_0(d,x) = \E[P]{Y \mid D = d, X = x}$ again does not depend on $j$. Other recent literature has also considered doubly-robust approaches to policy optimization based on observational data (e.g., \citealp{athey2021policy}).
\end{example}

Given an estimator $\ghh_j$ of $\gamma_{0j}$, it may seem natural to estimate $\theta_{0j}$ by the empirical analog of \eqref{eq:moment}, \[\Check{\theta}_j = \En{m_j(W, \ghh_j)}, \quad j = 1, \, \ldots, \, p.\]
However, when $\ghh_j$ is obtained using modern ML methods, $\ghh_j$ typically converges to $\gamma_{0j}$ more slowly than the parametric $\sqrt{n}$-rate. In this formulation, \citet{chernozhukov2018double} show that the bias from using $\ghh_j$ instead of $\gamma_{0j}$ is of first-order magnitude, so that $\Check{\theta}_j$ is not asymptotically centered around the true value $\theta_{0j}$. Inference based on $\Check{\theta}_j$ will thus be incorrect if one fails to account for that.

Following \citet{chernozhukov2018double} and subsequent work, we proceed by adjusting the moment condition \eqref{eq:moment} to make it immune, to a first order, against the estimation error in $\ghh_j$. This method is known as double or debiased machine learning (DML). Below we collect some existing results (Propositions \ref{prop:rr} and \ref{prop:neyman}) that show how to ``debias'' the moment condition \eqref{eq:moment} in the particular case of linear, mean-square continuous functionals of $\gamma$.

\begin{assumption}[Linearity and mean-square continuity] \label{ass:linearity} For all $j = 1, \, \ldots, \, p$, the moment functional $\gamma \mapsto m_j(w, \gamma)$ is linear and mean-square continuous, i.e., there exists $\Bar{Q} < \infty$ such that \[\E[P]{m_j(W,\gamma)^2} \leq \Bar{Q}^2 \E[P]{\gamma(W)^2} \quad \text{for all } \gamma \in \Gamma, j = 1, \, \ldots, \, p.\]
\end{assumption}

\begin{proposition}[Riesz representation]\label{prop:rr} Suppose \cref{ass:linearity} holds. Then, there exists a unique $\alpha_{0j} \in \mathrm{cl}(\mathrm{span} (\Gamma))$ such that \[\E[P]{m_j(W,\gamma)} = \E[P]{\alpha_{0j}(W)\gamma(W)} \quad \text{for all } \gamma \in \Gamma, j = 1, \, \ldots, \, p.\]
\end{proposition}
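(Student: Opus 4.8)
The plan is to recognize $\gamma \mapsto \E[P]{m_j(W,\gamma)}$ as a bounded linear functional on a Hilbert space and then invoke the Riesz representation theorem. First I would fix $j$ and set $L_j(\gamma) = \E[P]{m_j(W,\gamma)}$. Since $\gamma \mapsto m_j(w,\gamma)$ is linear pointwise by \cref{ass:linearity}, and the expectation is itself linear, $L_j$ is a linear functional on $\Gamma$.

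Next I would establish boundedness, which is where the mean-square continuity in \cref{ass:linearity} enters. By the triangle inequality for expectations and then Jensen's (equivalently, Lyapunov's) inequality,
\[|L_j(\gamma)| \leq \E[P]{|m_j(W,\gamma)|} \leq \left(\E[P]{m_j(W,\gamma)^2}\right)^{1/2} \leq \Bar{Q}\left(\E[P]{\gamma(W)^2}\right)^{1/2} = \Bar{Q}\Vert \gamma \Vert_{P,2},\]
where the penultimate inequality is precisely the mean-square continuity bound. Hence $L_j$ is bounded on $(\Gamma, \Vert \cdot \Vert_{P,2})$ with operator norm at most $\Bar{Q}$.

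The key step is to work in the correct Hilbert space. Let $\mathcal{H} = \mathrm{cl}(\mathrm{span}(\Gamma))$, a closed linear subspace of $L^2(P)$ and therefore itself a Hilbert space under $\langle f, g \rangle = \E[P]{f(W)g(W)}$. Because $\Gamma$ is assumed linear we have $\mathrm{span}(\Gamma) = \Gamma$, so $L_j$ is already defined and bounded on the dense subset $\Gamma$ of $\mathcal{H}$; by continuity it extends uniquely to a bounded linear functional on all of $\mathcal{H}$. Applying the Riesz representation theorem on $\mathcal{H}$ then delivers a unique $\alpha_{0j} \in \mathcal{H}$ with $L_j(\gamma) = \langle \alpha_{0j}, \gamma \rangle = \E[P]{\alpha_{0j}(W)\gamma(W)}$ for all $\gamma \in \mathcal{H}$, hence in particular for all $\gamma \in \Gamma$, which is the claimed identity.

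The main point to get right — rather than a genuine obstacle — is that the representer must be sought inside $\mathcal{H}$ rather than in all of $L^2(P)$: applying Riesz in $L^2(P)$ would pin $\alpha_{0j}$ down only up to an element of the orthogonal complement of $\Gamma$, whereas restricting to $\mathcal{H}$ is exactly what simultaneously secures the membership $\alpha_{0j} \in \mathrm{cl}(\mathrm{span}(\Gamma))$ and its uniqueness. For uniqueness directly: if $\alpha_{0j}, \alpha_{0j}' \in \mathcal{H}$ both represent $L_j$, then $\langle \alpha_{0j} - \alpha_{0j}', \gamma \rangle = 0$ for every $\gamma \in \Gamma$, hence by linearity and continuity for every $\gamma \in \mathcal{H}$; choosing $\gamma = \alpha_{0j} - \alpha_{0j}'$ forces $\Vert \alpha_{0j} - \alpha_{0j}' \Vert_{P,2} = 0$, so the two coincide in $L^2(P)$.
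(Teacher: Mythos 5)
Your proof is correct and takes exactly the route the paper intends: the paper's own ``proof'' is a one-line appeal to the Riesz representation theorem (deferring details to the cited references), and your argument fills in the standard steps --- linearity and boundedness of $\gamma \mapsto \E[P]{m_j(W,\gamma)}$ via mean-square continuity and Jensen's inequality, extension by continuity to the Hilbert space $\mathrm{cl}(\mathrm{span}(\Gamma))$, and an application of Riesz there. Your emphasis on seeking the representer inside $\mathrm{cl}(\mathrm{span}(\Gamma))$ rather than in all of $L^2(P)$, which is what delivers both membership and uniqueness, is precisely the point the paper itself flags in its footnote on the minimal Riesz representer, so nothing is missing.
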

\begin{proof}
This is a consequence of the Riesz representation theorem. For a proof under more general conditions in the context of classic semiparametric theory, see, e.g., \citet{newey1994asymptotic}, \citet{ichimura2022influence}. See also, e.g., \citet{chernozhukov2018automatic} for a more detailed discussion of the role of the Riesz representer in DML.
\end{proof}

\begin{remark}
It is easy to see that the functionals in examples \ref{ex:1} to \ref{ex:3} are linear. One can also show mean-square continuity under appropriate regularity conditions (e.g., an overlap condition, $0 < \underline{p} \leq \PP{D = d \mid X} \leq \overline{p} < 1$ a.s. for all possible treatments or treatment profiles $d$), see e.g., \citet{chernozhukov2018automatic}. As a consequence, the Riesz representer exists in examples \ref{ex:1} to \ref{ex:3}.
\end{remark}

In general, the true Riesz representer $\alpha_{0j}$ will be unknown. We assume that an estimator $\ahh_j$ can be obtained from the same data. In some cases, the explicit form of $\alpha_{0j}$ can be derived. For instance, it can be shown that \[\alpha^*_{0j}(D,X) = \frac{\mathbf{1}\{D = d_j\}}{\PP{D = d_j \mid X}} - \frac{\mathbf{1}\{D = d_0\}}{\PP{D = d_0 \mid X}}\] is a Riesz representer for the functional \cref{ex:1}.\footnote{With a restricted semiparametric model $\Gamma$, it is possible that $\alpha^*_{0j} \notin \mathrm{cl}(\mathrm{span} (\Gamma))$, in which case the \textit{unique} or minimal Riesz representer of \cref{prop:rr} would be $\alpha_{0j} = \mathrm{Proj}(\alpha^*_{0j} \mid \mathrm{cl}(\mathrm{span} (\Gamma)))$.} An estimator $\ahh_j$ can be then constructed by plugging in a non-parametric estimate of the propensity scores $\PP{D = d \mid X}$. A more recent strand of literature considers automatic estimation of $\alpha_{0j}$, where knowledge of the explicit form of $\alpha_{0j}$ is not required \citep{chernozhukov2018automatic, chernozhukov2020adversarial, chernozhukov2021automatic}.

We consider a point estimator for $\{\theta_{0j}\}_{j=1, \, \ldots, \, p}$ based on the following augmented moment condition:
\begin{equation}
    \theta_{0j} = \E[P]{m_j(W, \gamma_{0j}) + \alpha_{0j}(W)(Y - \gamma_{0j}(W))}, \quad j = 1, \, \ldots, \, p, \label{eq:drmoment}
\end{equation} For the ATE examples, this is the augmented inverse propensity-weighted estimator (AIPW) of \citet{robins1994estimation}. The addition of the term $\alpha_{0j}(W)(Y - \gamma_{0j}(W))$ can be seen as ``debiasing'' the moment condition \eqref{eq:moment}, since it makes it robust to estimation errors in $\ghh_{0j}$ and $\ahh_{0j}$ in a sense made explicit by the proposition below.

\begin{proposition}[Neyman orthogonality and double robustness] \label{prop:neyman} Let $Z = (Y, W)$, and $\psi_j(Z, \theta, \gamma, \alpha)$ denote the augmented score: \[\psi_j(Z, \theta, \gamma, \alpha) = m_j(W, \gamma) + \alpha(W)(Y - \gamma(W)) - \theta.\] We have: \begin{enumerate}[(i)]
    \item (Neyman orthogonality) The Gateaux derivative maps of $\E[P]{\psi_j(Z, \theta, \gamma, \alpha)}$ with respect to $\gamma$ and $\alpha$ are 0 at $(\theta_{0j}, \gamma_{0j}, \alpha_{0j})$:
    \begin{align*}
      \partial_r \E[P]{\psi_j(Z, \theta_{0j}, \gamma_{0j} + r(\gamma - \gamma_{0j}), \alpha_{0j})} |_{r = 0} &= 0 \quad \text{for all } \gamma \in \Gamma,\\ \partial_r \E[P]{\psi_j(Z, \theta_{0j}, \gamma_{0j}, \alpha_{0j}  + r(\alpha - \alpha_{0j}))} |_{r = 0} & = 0 \quad \text{for all } \alpha \in \Gamma.
    \end{align*}

    \item (Double robustness) Moreover, \[\E[P]{\psi_j(Z, \theta_{0j}, \gamma, \alpha)} = - \E[P]{(\alpha(W) - \alpha_{0j}(W))(\gamma(W) - \gamma_{0j}(W))}\] so that the augmented score is mean zero for all $\alpha \in \Gamma$ whenever $\gamma = \gamma_{0j}$, or for all $\gamma \in \Gamma$ whenever $\alpha = \alpha_{0j}$.
\end{enumerate}
\end{proposition}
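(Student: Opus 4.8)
The plan is to establish the double-robustness identity (ii) first, by a direct expansion of the mean score, and then to read off Neyman orthogonality (i) as an immediate corollary. Once (ii) exhibits the mean score at $\theta_{0j}$ as the bilinear remainder $-\E[P]{(\alpha(W) - \alpha_{0j}(W))(\gamma(W) - \gamma_{0j}(W))}$, differentiating along $\gamma_{0j} + r(\gamma - \gamma_{0j})$ while holding $\alpha = \alpha_{0j}$, or along $\alpha_{0j} + r(\alpha - \alpha_{0j})$ while holding $\gamma = \gamma_{0j}$, kills one of the two factors of the remainder, so it vanishes identically in $r$ and its derivative at $r = 0$ is trivially $0$.

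For (ii), I would expand
\[\E[P]{\psi_j(Z, \theta_{0j}, \gamma, \alpha)} = \E[P]{m_j(W, \gamma)} + \E[P]{\alpha(W)(Y - \gamma(W))} - \theta_{0j}\]
and treat the two nontrivial expectations in turn. For the first, the original moment condition \eqref{eq:moment} gives $\theta_{0j} = \E[P]{m_j(W, \gamma_{0j})}$, and since $\Gamma$ is linear we have $\gamma - \gamma_{0j} \in \Gamma$, so \cref{prop:rr} yields $\E[P]{m_j(W, \gamma)} - \theta_{0j} = \E[P]{\alpha_{0j}(W)(\gamma(W) - \gamma_{0j}(W))}$. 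For the second, I would write $Y - \gamma = (Y - \gamma_{0j}) - (\gamma - \gamma_{0j})$ together with $\alpha = \alpha_{0j} + (\alpha - \alpha_{0j})$, multiply out into four terms, and collect. Adding the two contributions, the terms $\E[P]{\alpha_{0j}(W)(\gamma(W) - \gamma_{0j}(W))}$ cancel and what survives is exactly the asserted remainder $-\E[P]{(\alpha(W) - \alpha_{0j}(W))(\gamma(W) - \gamma_{0j}(W))}$.

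The step that needs the most care is the vanishing of the two terms carrying the residual $Y - \gamma_{0j}(W)$, namely $\E[P]{\alpha_{0j}(W)(Y - \gamma_{0j}(W))}$ and $\E[P]{(\alpha(W) - \alpha_{0j}(W))(Y - \gamma_{0j}(W))}$. Both are $0$ precisely because $\gamma_{0j}(W) = \E[P]{Y \mid W}$, so that $\E[P]{Y - \gamma_{0j}(W) \mid W} = 0$ and conditioning on $W$ annihilates any $W$-measurable multiplier; this is also what reconciles the augmented moment condition \eqref{eq:drmoment} with the original \eqref{eq:moment}. This conditional-mean-zero property of the true nuisance is the only genuine structural input beyond linearity (\cref{ass:linearity}) and the Riesz representation, and it is where the choice of $\gamma_{0j}$ as a conditional expectation (as in Examples \ref{ex:1}--\ref{ex:3}) enters. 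For completeness I would also verify (i) directly by differentiating under the expectation: linearity of $m_j$ makes $r \mapsto \E[P]{m_j(W, \gamma_{0j} + r(\gamma - \gamma_{0j}))}$ affine with slope $\E[P]{m_j(W, \gamma - \gamma_{0j})}$, which \cref{prop:rr} matches against $\E[P]{\alpha_{0j}(W)(\gamma(W) - \gamma_{0j}(W))}$ to give a zero $\gamma$-derivative, while the $\alpha$-derivative equals $\E[P]{(\alpha(W) - \alpha_{0j}(W))(Y - \gamma_{0j}(W))} = 0$ by the same conditioning argument.
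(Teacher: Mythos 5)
Your proof is correct, and it supplies something the paper does not: the paper gives no argument for \cref{prop:neyman}, deferring entirely to \citet{chernozhukov2018automatic}, whereas you reconstruct the standard derivation behind that citation. Your route --- prove the double-robustness identity (ii) first by expanding $\E[P]{\alpha(W)(Y-\gamma(W))}$ into four terms around $(\gamma_{0j},\alpha_{0j})$ and matching $\E[P]{m_j(W,\gamma-\gamma_{0j})}$ against $\E[P]{\alpha_{0j}(W)(\gamma(W)-\gamma_{0j}(W))}$ via \cref{prop:rr}, then read off (i) because the bilinear remainder vanishes identically along either one-parameter path --- is clean and complete. You are also right to isolate the one genuinely additional structural input: killing $\E[P]{\alpha_{0j}(W)(Y-\gamma_{0j}(W))}$ and $\E[P]{(\alpha(W)-\alpha_{0j}(W))(Y-\gamma_{0j}(W))}$ requires $\E[P]{Y-\gamma_{0j}(W)\mid W}=0$, or at least that $\gamma_{0j}$ is the $L^2(P)$-projection of $Y$ onto $\mathrm{cl}(\mathrm{span}(\Gamma))$ so that the residual is orthogonal to every $\alpha$ in that class; the paper only says informally that $\gamma_{0j}$ is ``e.g., a conditional expectation,'' so flagging this as the operative hypothesis rather than a consequence of \cref{ass:linearity} is exactly right. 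Two minor remarks: linearity of $\Gamma$ is what licenses applying \cref{prop:rr} to the increment $\gamma-\gamma_{0j}$, which you note; and in your direct check of (i), the $\gamma$-derivative of the mean score is the difference $\E[P]{m_j(W,\gamma-\gamma_{0j})}-\E[P]{\alpha_{0j}(W)(\gamma(W)-\gamma_{0j}(W))}$, the second piece coming from the $-\alpha_{0j}(W)\gamma(W)$ term inside the score --- your phrasing compresses this, but the computation is the right one.
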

\begin{proof}
See, e.g., \citet{chernozhukov2018automatic}.
\end{proof}

\begin{remark}\label{rem:tradeoff}
\cref{prop:neyman} (ii) hints at a trade-off between the quality of the estimates for $\gamma_{0j}$ and $\alpha_{0j}$. In situations where $\gamma_{0j}$ can be estimated very well, it may be possible to achieve $\sqrt{n}$-convergence and asymptotic normality even when the rate of convergence of $\ahh_j$ is slow, and vice versa.
\end{remark}

Consider a simultaneous confidence band for $\{\theta_{0j}\}_{j=1, \, \ldots, \, p}$ constructed as $\times_{j=1}^p [\thh_j \mp n^{-1/2}\shh_jc_{\alpha}]$, where the point estimates $\{\thh_{j}\}_{j=1, \, \ldots, \, p}$ are based on an empirical analog of \eqref{eq:drmoment}, \[\thh_{0j} = \En{m_j(W, \ghh_{j}) + \ahh_{j}(W)(Y - \ghh_{j}(W))}, \quad j = 1, \, \ldots, \, p,\]
and $\shh_{j}^2$ is an estimator of $\sigma_j^2 = \mathrm{Var}(\sqrt{n}(\thh_{j} - \theta_{0j}))$. The critical value $c_{\alpha}$ will be chosen such that \[\PP{\max_{1 \leq j \leq p} \sqrt{n}\frac{|\thh_{j} - \theta_{0j}|}{\shh_j} \leq c_\alpha} \approx 1 - \alpha,\] using a normal approximation for the sup-$t$-statistic. The goal of this section is to provide finite-sample guarantees for this normal approximation in the form of a bound on the Kolmogorov distance between the finite-sample distribution of the sup-$t$-statistic and $\max_{1 \leq j \leq p} Z_j$ for a suitable jointly Gaussian distribution $(Z_1, \ldots, Z_p) \sim \mathcal{N}(0, \Sigma)$: \[\sup_{t \in \RR} \left| \PP{\max_{1 \leq j \leq p} \sqrt{n}\frac{(\thh_j - \theta_{0j})}{\sigma_j} \leq t} - \P{\max_{1 \leq j \leq p} Z_j \leq t}\right|.\]
Towards that goal, we list a set of sufficient regularity conditions in the following assumptions.

\begin{assumption}[Moment conditions]\label{ass:mom}
Suppose the following moment conditions hold:
\begin{enumerate}[(i)]
    \item (Bounded heteroskedasticity of the outcome) $\E[P]{(Y - \gamma_{0j}(W))^2 \mid W} \leq \Bar{\sigma}$ for all $j = 1, \, \ldots, \, p$.

    \item (Variance bounded away from 0) Let $\overline{\psi}_{0j}(Z) = m_j(W, \gamma_{0j}) + \alpha_{0j}(Y - \gamma_{0j}(W)) - \theta_{0j}$ denote the oracle score (that is, the score evaluated at the true value of the parameters). We assume $\sigma_j^2 = \E[P]{\overline{\psi}_{0j}(Z)^2} \geq \sigma_{\min}^2 > 0$ for all $j = 1, \, \ldots, \, p$. Moreover, let $\Sigma$ denote the correlation matrix of the $\psi_{0j}(Z)$, with $(j,k)$-th entry given by \[\Sigma_{jk} = \E[P]{\frac{\overline{\psi}_{0j}(Z) \overline{\psi}_{0k}(Z)}{\sigma_j \sigma_k}} \quad j,k = 1, \, \ldots, \, p.\] We assume that the smallest eigenvalue of $\Sigma$ is bounded below by some $\lambda_{\min} \geq 0$.

    \item (Higher-order moments) For some $q \geq 4$, there exists $b_n < \infty$ such that \[\Vert \max_{1 \leq j \leq p} |\overline{\psi}_{0j}(Z)/\sigma_j|\Vert_{P,q} \leq b_n,\] and, for all $j = 1, \, \ldots, \, p$, \[\E[P]{(\overline{\psi}_{0j}(Z)/\sigma_j)^4} \leq b_n^2.\]
\end{enumerate}
\end{assumption}

\begin{remark}[On the eigenvalue condition] The assumption that the minimum eigenvalue of $\Sigma$ is bounded below allows us to obtain nearly-optimal rates with respect to the sample size in the normal approximation we use \citep{chernozhukov2021nearly}. In practice, it implies that the identifying moments do not become perfectly correlated as the number of parameters grows. This restriction precludes certain applications, e.g., using a grid of $p$ points to approximate the CDF of an outcome, with the grid becoming dense asymptotically. The case where there can possibly be a continuum of parameters will be covered in the next section.
\end{remark}

\begin{assumption}[Nuisance parameters]\label{ass:nuisance} Suppose the following:
\begin{enumerate}[(i)]
    \item (RMSE convergence rates) We have $\Vert \ghh_{j} - \gamma_{0j} \Vert_{P,2} \leq \R(\ghh)$ and $\Vert \ahh_{j} - \alpha_{0j} \Vert_{P,2} \leq \R(\ghh)$ for all $j = 1, \, \ldots, \, p$.
    \item (Boundedness of Riesz Representer) We have $\Vert \alpha_{0j} \Vert_{\infty} \leq \Bar{\alpha}$, $\Vert \ahh_{j} \Vert_{\infty} \leq \Bar{\alpha}$ for all $j = 1, \, \ldots, \, p$.
    \item (Envelope and entropy conditions) The class of functions
    \begin{align*}
\F = \{ (Y,W) \mapsto & (m_j(W, \gamma) + \alpha(W)(Y - \gamma(W)) - m_j(W, \gamma_{0j}) - \alpha_{0j}(W)(Y - \gamma_{0j}(W)) \, : \\ & \Vert \gamma - \gamma_{0j} \Vert_{P,2} \leq \R(\ghh), \Vert \alpha - \alpha_{0j} \Vert_{P,2} \leq \R(\ahh), j = 1, \, \ldots, \, p \}.
\end{align*} is suitably measurable, with a measurable envelope $F \geq \sup_{f \in \F} |f|$ that satisfies $\Vert F \Vert_{P,2+\delta} \leq M_n$ for some $\delta\geq 0$ and some sequence of constants $M_n$. There exist sequences $v_n \geq 1$, $a_n \geq n \vee M_n$, such that the uniform entropy numbers of $\F$ obey \begin{align*}
   \log \sup_Q N(\varepsilon \Vert F \Vert_{Q,2}, \F, \Vert \cdot \Vert_{Q,2}) \leq v_n \log(a_n/\varepsilon), \qquad \text{for all } 0 < \varepsilon \leq 1.
\end{align*}
\end{enumerate}
\end{assumption}

\begin{remark}[On entropy conditions]
The goal of entropy conditions is to control the complexity of the class of functions used to estimate nuisance parameters. On the one hand, this class needs to be rich enough for it to be possible to obtain good MSE convergence rates. On the other hand, if the class is too complex, it may lead to an overfitting bias when using the same data to estimate the nuisance parameters $\gamma_{0j}$, $\alpha_{0j}$ and the target parameter $\theta_{0j}$. An alternative to restricting the entropy of the class of functions considered is using some form of sample splitting, as discussed below in \cref{rem:entropy}.
\end{remark}

The following is one of the main theoretical results of this paper. It provides a bound on the Kolmogorov distance between the finite-sample distribution of the sup-$t$-statistic of the DML estimators and $\max_{1 \leq j \leq p} Z_j$ for the Gaussian limit distribution $(Z_1, \ldots, Z_p) \sim \mathcal{N}(0, \Sigma)$, where $\Sigma$ is as defined in Assumption \ref{ass:mom}.
\begin{theorem}\label{thm:linear}
Suppose Assumptions \ref{ass:linearity}, \ref{ass:mom} and \ref{ass:nuisance} hold. Then,
\[\sup_{t \in \RR} \left| \PP{\max_{1 \leq j \leq p} \sqrt{n}\frac{(\thh_j - \theta_{0j})}{\sigma_j} \leq t} - \P{\max_{1 \leq j \leq p} Z_j \leq t}\right| \leq \varrho(n, p),\] where \begin{align*}
\varrho(n, p) & = C(q) \left\lbrace \frac{b_n (\log p)^{3/2} \log n}{\sqrt{n} \lambda_{\min}} + \frac{b_n^2 (\log p)^{2} \log n}{n^{1-2/q} \lambda_{\min}} + \left[\frac{b_n^q (\log d)^{3q/2-4} \log n \log (pn)}{n^{q/2 - 1} (\lambda_{\min})^{q/2}} \right]^\frac{1}{q-2}\right\rbrace \tag{A}\\ & + \frac{6\sqrt{\log p}}{\sigma_{\min}} \left\lbrace \Delta_{1n} + \Delta_{2n} \right\rbrace \tag{B} \\ & + \frac{c}{\log n}
\tag{C} \\
\Delta_{1n} & = K\left(2+\delta,\frac{c}{3}\right)\Bigg( \left[\left((2+\sqrt{2})\Bar{\alpha} + \sqrt{2} \Bar{Q} \right)\R(\ghh) + \Bar{\sigma}\R(\ahh) \right]\sqrt{3v_n\log(3a_n)} \\ & \quad + 3v_n n^{\frac{1}{2+\delta}-\frac{1}{2}} 5M_n \log(3a_n) \Bigg),  \\
\Delta_{2n} & =  \sqrt{n} \R(\ghh) \R(\ahh),
\end{align*} for any $c > 0$, some constant $C(q) > 0$ depending on $q$, and some constant $K\left(2+\delta,\frac{c}{3}\right) > 0$ that may depend on $\delta$ and $c$.
\end{theorem}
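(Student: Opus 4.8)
The plan is to split the standardized statistic into an \emph{oracle} part, to which a high-dimensional central limit theorem applies, plus a remainder that Neyman orthogonality and the entropy bound keep small. Using $\thh_j - \theta_{0j} = \En{\psi_j(Z,\theta_{0j},\ghh_j,\ahh_j)}$ and $\E[P]{\overline{\psi}_{0j}(Z)} = 0$, and the linearity of $\Gn$, I would first write
\[ \sqrt{n}\,\frac{\thh_j - \theta_{0j}}{\sigma_j} = \frac{\Gn[\overline{\psi}_{0j}]}{\sigma_j} + \frac{\Gn[f_j]}{\sigma_j} + \frac{\sqrt{n}\,\E[P]{f_j}}{\sigma_j}, \qquad f_j := \psi_j(\cdot,\theta_{0j},\ghh_j,\ahh_j) - \overline{\psi}_{0j}. \]
The constant $-\theta_{0j}$ cancels inside $f_j$, so $f_j$ is of exactly the form defining the class $\F$ in \cref{ass:nuisance}(iii) (with $\gamma = \ghh_j$, $\alpha = \ahh_j$, which satisfy the RMSE balls by \cref{ass:nuisance}(i)); hence $|\Gn[f_j]| \leq \Vert \Gn \Vert_{\F} := \sup_{f \in \F}|\Gn[f]|$. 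This is the step that handles the absence of sample splitting: the data-dependent pair $(\ghh_j,\ahh_j)$ is replaced by a supremum over the whole entropy-controlled class.

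For the oracle part, the vectors $(\overline{\psi}_{0j}(Z_i)/\sigma_j)_{j=1}^p$ are i.i.d., mean zero, with unit variances and correlation matrix $\Sigma$. I would apply the nearly-optimal high-dimensional normal approximation of \citet{chernozhukov2021nearly} to $\max_{1 \leq j \leq p}\Gn[\overline{\psi}_{0j}/\sigma_j]$: its moment inputs are supplied by the $b_n$ bounds of \cref{ass:mom}(iii), and its non-degeneracy input is the eigenvalue lower bound $\lambda_{\min}$ of \cref{ass:mom}(ii). This directly yields term (A) as the Kolmogorov distance between the law of $\max_j \Gn[\overline{\psi}_{0j}/\sigma_j]$ and that of $\max_j Z_j$.

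It then remains to bound the remainder $R_j := (\Gn[f_j] + \sqrt{n}\,\E[P]{f_j})/\sigma_j$ uniformly in $j$. For the bias part, \cref{prop:neyman}(ii) applied to the realized (fixed) functions gives $\E[P]{f_j} = -\E[P]{(\ahh_j - \alpha_{0j})(\ghh_j - \gamma_{0j})}$, and Cauchy--Schwarz with \cref{ass:nuisance}(i) bounds $\sqrt{n}\,|\E[P]{f_j}| \leq \sqrt{n}\,\R(\ghh)\R(\ahh) = \Delta_{2n}$. For the empirical-process part, I would invoke the maximal inequality of \citet{chernozhukov2014gaussian} for $\Vert \Gn \Vert_{\F}$ using the uniform entropy and the envelope $M_n$ of \cref{ass:nuisance}(iii); the $L^2$-size of $\F$ it requires is bounded, via linearity and mean-square continuity (\cref{ass:linearity}), the conditional-variance bound (\cref{ass:mom}(i)), and the boundedness of $\alpha_{0j},\ahh_j$ (\cref{ass:nuisance}(ii)), by the first bracket of $\Delta_{1n}$, while the envelope produces the second. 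A Markov-type tail bound at a level governed by $c$ then turns this into the statement that $\Vert \Gn \Vert_{\F} \leq \Delta_{1n}$ except on an event of probability at most a constant multiple of $1/\log n$; this is the source of the constant $K(2+\delta,\tfrac{c}{3})$ and of term (C).

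Finally I would combine the two pieces. Since $x \mapsto \max_j x_j$ is $1$-Lipschitz for the sup-norm, on the event $\{\Vert \Gn \Vert_{\F} \leq \Delta_{1n}\}$ we get $\bigl|\max_j \sqrt{n}(\thh_j-\theta_{0j})/\sigma_j - \max_j \Gn[\overline{\psi}_{0j}/\sigma_j]\bigr| \leq \max_j|R_j| \leq \sigma_{\min}^{-1}(\Delta_{1n}+\Delta_{2n})$. Sandwiching $\PP{\max_j \sqrt{n}(\thh_j-\theta_{0j})/\sigma_j \leq t}$ between shifted oracle events and applying the Gaussian anti-concentration inequality $\sup_t \P{|\max_j Z_j - t| \leq \varepsilon} \leq \varepsilon\cdot O(\sqrt{\log p})$ (valid because the $Z_j$ have unit variance) converts this uniform closeness into the Kolmogorov bound (B), the anti-concentration constant being absorbed into the factor $6\sqrt{\log p}/\sigma_{\min}$; the excluded event adds (C). I expect the main obstacle to be precisely this assembly: because nuisances are estimated on the same sample, the remainder $\max_j|R_j|$ is \emph{random}, so the truncation level, the maximal-inequality tail, and the anti-concentration slack must be balanced so that the $\sqrt{\log p}$ loss from the maxima multiplies only $\Delta_{1n}+\Delta_{2n}$ and the residual probability stays at $O(1/\log n)$. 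Verifying the moment and non-degeneracy inputs of \citet{chernozhukov2021nearly} for (A) and carrying out the $L^2$-size computation for $\F$ are the remaining places requiring care, but are essentially mechanical given the assumptions.
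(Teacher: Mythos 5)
Your proposal is correct and takes essentially the same route as the paper's proof: the oracle-plus-remainder decomposition, double robustness and Cauchy--Schwarz for the bias term $\Delta_{2n}$, a maximal inequality over the entropy-controlled class for the empirical-process remainder, the nearly-optimal high-dimensional CLT of \citet{chernozhukov2021nearly} for the oracle term (A), and Gaussian anti-concentration to convert the high-probability remainder bound into the Kolmogorov terms (B) and (C). The only cosmetic difference is that the paper splits the empirical-process remainder into three subclasses $\F_A$, $\F_B$, $\F_C$ of $\F$ (which is how the specific constants in $\Delta_{1n}$, e.g.\ the factors $\sqrt{3v_n\log(3a_n)}$ and $5M_n$, arise), whereas you bound it by a single supremum over $\F$, which gives the same structure with slightly different constants.
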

\begin{proof}
The full proof is in Appendix \ref{sec:proof1}. Here we discuss the heuristics, which may help understand each of the terms (A), (B) and (C).

The first step of the proof is a decomposition of $\thh_j$ into an oracle estimator, \[\tbb_j = \En{m(W, \gamma_{0j}) + \alpha_{0j}(W)(Y - \gamma_{0j}(W))}\] (i.e., the sample average of the augmented moment condition \eqref{eq:drmoment} if $\gamma_{0j}$ and $\alpha_{0j}$ were known), and a deviation from that oracle estimator, $\mathbb{E}_n[m(W, \ghh_j) + \ahh_{j}(W)(Y - \ghh_{j}(W)) - m(W, \gamma_{0j}) - \alpha_{0j}(W)(Y - \gamma_{0j}(W))]$.

On the one hand, the oracle estimator satisfies a high-dimensional version of a Berry-Esseen type of inequality, which allows us to quantify the Kolmogorov distance between its finite-sample distribution and the distribution of the corresponding multivariate normal distribution. In particular, we use the nearly-optimal rates in \citet{chernozhukov2021nearly}, which yield term (A). We can obtain refinements on this term by assuming stronger conditions on the higher-order moments of $\overline{\psi}_{0j}(Z)$, as discussed in \cref{rem:tails}.

On the other hand, the deviation from the oracle estimator can be bound using empirical process techniques. In the Appendix, we show that, with probability no more than $c/\log n$ (C), the deviation is upper bounded by $\Delta_{1n} + \Delta_{2n}$ (B). Improvements on (B) can be obtained by using some form of sample splitting, as discussed in \cref{rem:entropy}.
\end{proof}

\begin{remark}[Stronger tail conditions]\label{rem:tails}
We could obtain a simpler bound for term (A) by assuming stronger tail conditions than the ones in \cref{ass:mom} (iii), as made clear in \cref{lem:gaus}, which collects results from \citet{chernozhukov2021nearly}. In particular,
\begin{enumerate}[(i)]
    \item Assuming that $\overline{\psi}_{0j}(Z)$ is sub-Gaussian with Orlicz norm upper-bounded by $b_n$, (A) could be replaced by:
    \[C\left\lbrace \frac{b_n (\log p)^{3/2} \log n}{\sqrt{n} \lambda_{\min}} + \frac{b_n^2 (\log p)^{2}}{\sqrt{n\lambda_{\min}}} \right\rbrace\] for an absolute constant $C > 0$.

    \item Assuming that $\overline{\psi}_{0j}(Z)$ is almost-surely bounded by $b_n$, (A) could be replaced by:
    \[C\frac{b_n (\log p)^{3/2} \log n}{\sqrt{n} \lambda_{\min}}\] for an absolute constant $C > 0$.
\end{enumerate}
These stronger assumptions may be satisfied in certain economic applications, for example if outcomes are binary or naturally bounded (e.g., hours worked in a labor supply example).
\end{remark}

\begin{remark}[Removing entropy conditions by sample splitting]\label{rem:entropy}
The role of the entropy conditions in \cref{ass:nuisance} is to prevent overfitting bias, due to the same data being used in the nuisance parameter estimators $\ghh_j$, $\ahh_j$ and the estimator of the target parameter $\thh_j$. Another way to overcome this overfitting problem is, as discussed in \citet{chernozhukov2018double} or \citet{newey2018cross}, to use sample splitting.
\begin{enumerate}
    \item With pure sample splitting, observations 1 to $n$ are divided randomly into $L$ data folds of roughly equal size, $I_\ell$, $\ell = 1, \, \ldots, L$. For a given $\ell$, estimators $\ghh_{j\ell}$, $\ahh_{j\ell}$ of $\gamma_{0j}$ and $\alpha_{0j}$ are constructed using the data \textit{not} in $\ell$, $I_{\ell}^c$. An estimator for $\theta_{0j}$ is then constructed as: \[\thh_{j} = \frac{1}{n} \sum_{\ell = 1}^L \sum_{i \in I_\ell} [m_j(W_i, \ghh_{j\ell}) + \ahh_{j\ell}(W_i)(Y_i - \ghh_{j\ell})].\]
    In that case, within the $\ell$-th fold and after conditioning on $I_{\ell}^c$, the summands are i.i.d., and so we can set $v_n = 1$, $a_n = e$ in \cref{ass:nuisance} (iii).

    \item An alternative is to use a ``dirty'' version of sample splitting, in which data in the $\ell$-th fold is used to \textit{select} amongst a finite set of estimators $\{(\ghh^{(1)}_{j\ell}, \ahh^{(1)}_{j\ell}), \ldots, (\ghh^{(r)}_{j\ell}, \ahh^{(r)}_{j\ell})\}$ trained on data \textit{not} in $\ell$. In that case, because a covering number for a finite class of functions is at most its cardinality, we can set $v_n = 1$, $a_n = e \vee r$. \qedhere
\end{enumerate}
\end{remark}

Finally, the asymptotic validity of the simultaneous confidence band follows as a corollary of \cref{thm:linear} under two additional assumptions. This is not a new result, as it could be seen as a particular case of \citealp{belloni2018uniformly}, but we present it here for completeness.

\begin{assumption}[Consistent variance estimation]\label{ass:var1} Suppose that we have an estimator $\shh_j$ of $\sigma_j$ for all $j = 1, \, \ldots, \, p$ such that $\max_{1\leq j \leq p} (\shh_j/\sigma_j) \overset{p}{\rightarrow} 1$.
\end{assumption}

\begin{assumption}[Growth conditions]\label{ass:growth1} Suppose the following growth conditions as $n, p \rightarrow \infty$:
\begin{enumerate}[(i)]
    \item (Nuisance parameters converge fast enough) $\sqrt{\log (p) n} \R(\ghh)\R(\ahh) \rightarrow 0$.
    \item (Complexity characteristics do not grow too fast) \[\sqrt{\log (p) v_n\log(a_n)} [\R(\ghh)\vee\R(\ahh)] \rightarrow 0  \quad \text{and} \quad \sqrt{\log (p)} v_n n^{\frac{1}{2+\delta} - \frac{1}{2}}M_n \log(a_n) \rightarrow 0.\]
    \item (Moment bounds do not grow too fast)
    \[\frac{b_n (\log p)^{3/2} \log n}{\sqrt{n} \lambda_{\min}} + \frac{b_n^2 (\log p)^{2} \log n}{n^{1-2/q} \lambda_{\min}} + \left[\frac{b_n^q (\log d)^{3q/2-4} \log n \log (pn)}{n^{q/2 - 1} (\lambda_{\min})^{q/2}} \right]^\frac{1}{q-2} \rightarrow 0.\]
\end{enumerate}
\end{assumption}

\begin{remark}
As we pointed out in \cref{rem:tradeoff}, there is a tradeoff between the RMSE convergence rate of $\ghh$ and of $\ahh$, which is made explicit in (i). In the case of a single parameter, $p = 1$, a sufficient condition for (i) is that both $\ghh$ and $\ahh$ converge faster than $n^{-1/4}$, a rate that is typically attainable by non-parametric estimators \citep{chernozhukov2018double}. Note that the dependence of $p$ in the bound is only logarithmic, allowing for very high dimensional cases (potentially $p \gg n$).
\end{remark}

\begin{corollary}[Validity of the simultaneous confidence band]
Under Assumptions \ref{ass:linearity} to \ref{ass:growth1}, we have \[\PP{\thh_j - n^{-1/2}\shh_j c_{\alpha} \leq \theta_{0j} \leq \thh_j + n^{-1/2}\shh_j c_{\alpha}, \, \forall 1 \leq j \leq p} \rightarrow 1 - \alpha.\]
\end{corollary}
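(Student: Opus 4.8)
The plan is to rewrite the coverage event and reduce the corollary to three standard ingredients. Observe first that the event in the statement is exactly $\{\hat{T}\le c_\alpha\}$, where $\hat{T}:=\max_{1\le j\le p}\sqrt{n}\,|\thh_j-\theta_{0j}|/\shh_j$ is the two-sided sup-$t$-statistic, and where $c_\alpha$ is the $(1-\alpha)$-quantile of $M:=\max_{1\le j\le p}|Z_j|$; this quantile is well defined and satisfies $\P{M\le c_\alpha}=1-\alpha$ because $\sigma_{\min}>0$ forces $\Sigma$ to be nondegenerate and $M$ to have a continuous distribution. I would then combine (i) a two-sided form of \cref{thm:linear}, (ii) \cref{ass:var1} to pass from $\shh_j$ to $\sigma_j$, and (iii) Gaussian anti-concentration. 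For (i), note that $\{\max_j|x_j|\le t\}=\{-t\le x_j\le t,\ \forall j\}$ is a symmetric hyperrectangle, and the high-dimensional CLT and Gaussian comparison results of \citet{chernozhukov2021nearly} that underlie \cref{thm:linear} bound probabilities over all hyperrectangles using the same $p$-dimensional correlation matrix $\Sigma$ with $\lambda_{\min}$; hence $\sup_{t\in\RR}|\PP{T\le t}-\P{M\le t}|\le\varrho(n,p)$, where $T:=\max_{1\le j\le p}\sqrt{n}\,|\thh_j-\theta_{0j}|/\sigma_j$, and $\varrho(n,p)\to 0$ by \cref{ass:growth1}.

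Next I would control the estimated variance. Reading \cref{ass:var1} as $\delta_n:=\max_{1\le j\le p}|\shh_j/\sigma_j-1|=o_p(1)$, on the event $\{\delta_n\le\epsilon\}$ (for fixed $\epsilon\in(0,1)$) every factor satisfies $\sigma_j/\shh_j\in[(1+\epsilon)^{-1},(1-\epsilon)^{-1}]$, and since the summands are nonnegative this yields the multiplicative sandwich $(1+\epsilon)^{-1}T\le\hat{T}\le(1-\epsilon)^{-1}T$. Because $\PP{\delta_n>\epsilon}\to 0$, I obtain
\[
\PP{T\le(1-\epsilon)c_\alpha}-o(1)\ \le\ \PP{\hat{T}\le c_\alpha}\ \le\ \PP{T\le(1+\epsilon)c_\alpha}+o(1).
\]

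Applying (i) replaces $T$ by $M$ at cost $\varrho(n,p)\to 0$, reducing both bounds to $\P{M\le(1\pm\epsilon)c_\alpha}$. Finally I would invoke Gaussian anti-concentration (\citealp{chernozhukov2021nearly}): since the $Z_j$ have unit variance, $\sup_x\P{|M-x|\le\eta}\le C\eta\sqrt{\log p}$ and $c_\alpha\le C'\sqrt{\log p}$, so
\[
\bigl|\P{M\le(1\pm\epsilon)c_\alpha}-(1-\alpha)\bigr|\ \le\ C\,\epsilon\,c_\alpha\sqrt{\log p}\ \le\ C''\,\epsilon\log p.
\]
Sending $n\to\infty$ and then $\epsilon\downarrow 0$ gives $\PP{\hat{T}\le c_\alpha}\to 1-\alpha$ whenever $p$ is fixed. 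The main obstacle is precisely this last step when $p=p_n\to\infty$: the critical value $c_\alpha$ grows like $\sqrt{\log p}$ and anti-concentration contributes another factor $\sqrt{\log p}$, so the variance-estimation error must beat the $\log p$ scale. To make the argument rigorous I would instead choose $\epsilon=\epsilon_n\downarrow 0$ with $\PP{\delta_n>\epsilon_n}\to 0$, which requires the effective strengthening $\delta_n\log p=o_p(1)$ of \cref{ass:var1} (automatic for bounded $p$, and implied by a uniform rate such as $\max_j|\shh_j^2/\sigma_j^2-1|=o_p(1/\log p)$); then $\epsilon_n\log p\to 0$, the anti-concentration remainder vanishes, and combining the two displays yields the claimed convergence to $1-\alpha$.
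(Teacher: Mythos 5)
The paper offers no proof of this corollary at all---it simply defers to \citet{belloni2018uniformly}---so there is nothing internal to compare your argument against; your route is the standard one and is essentially the argument of that reference. The reduction is sound: sandwich the feasible statistic $\hat T$ between $(1\pm\epsilon)^{-1}T$ on the high-probability event where the variance ratios are within $\epsilon$ of one, replace $T$ by $M=\max_{1\le j\le p}|Z_j|$ at cost $\varrho(n,p)\to 0$ (using the two-sided, symmetric-hyperrectangle form of the Gaussian approximation, which, as you correctly note, is what \citet{chernozhukov2021nearly} actually delivers even though \cref{thm:linear} is stated one-sided), and absorb the $(1\pm\epsilon)$ dilation of the critical value via anti-concentration. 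Two of your observations deserve emphasis because they point at loose ends in the paper rather than in your proof. First, the corollary needs the symmetric version of \cref{thm:linear}; this is available from \cref{lem:gaus} and from applying \cref{lem:antic} to the $2p$-vector $(Z,-Z)$, but it is an extra step the paper never spells out. Second, and more substantively, you are right that \cref{ass:var1} in its literal form ($\max_j \shh_j/\sigma_j \overset{p}{\to} 1$, with no rate) suffices only for fixed $p$: with $p=p_n\to\infty$ the anti-concentration cost of dilating $c_\alpha$ by $1\pm\epsilon_n$ is of order $\epsilon_n c_\alpha\sqrt{\log p}\asymp\epsilon_n\log p$, and a sequence $\epsilon_n\downarrow 0$ witnessing the $o_p(1)$ convergence need not satisfy $\epsilon_n\log p_n\to 0$; the strengthened condition $\max_j|\shh_j/\sigma_j-1|=o_p(1/\log p)$ that you propose is indeed how the analogous hypothesis is phrased in the cited literature. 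Two minor quibbles: the continuity of the law of $M$, needed so that $\P{M\le c_\alpha}=1-\alpha$ exactly, follows from the unit diagonal of $\Sigma$ via the same anti-concentration bound, not from nondegeneracy of $\Sigma$; and in practice $c_\alpha$ depends on the unknown $\Sigma$ and must itself be estimated (e.g.\ by a multiplier bootstrap), an additional error term that your argument, like the paper, leaves aside.
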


\section{Continua of Parameters}
In this section, we consider the same setting as \citet{belloni2017program}. Again, suppose we have access to an i.i.d. sample $\{W_i\}_{i=1}^n$ from a probability distribution $P$ on $\mathcal{W}$. Now, we are interested in  constructing a simultaneous confidence band for the set of (scalar) parameters $\{\theta_{0u}\}_{u \in \U}$, satisfying the following moment condition:
\begin{equation}
    \E[P]{\psi_u(W, \theta_{0u}, \eta_{0u})} = 0, \quad u \in \U \label{eq:momentcont}
\end{equation}
for a possibly uncountable set $\U$, where $\eta_{0u} \in \TT_u$ is the unknown true value of an infinite-dimensional nuisance parameter. Here we also assume that an ML estimator $\ehh_u$ of $\eta_{0u}$ can be obtained from the same data.

Below we give a leading example where this framework may be appropriate.

\begin{example}[Distributional treatment effects]\label{ex:4} Consider a setting where we want to evaluate the effect of a binary treatment $D \in \{0,1\}$ on an outcome $Y$. We work with observational data and, as in Examples $\ref{ex:1}$ and $\ref{ex:2}$, we suppose that we have access to a rich enough set of controls $X$ such that an unconfoundedness assumption holds.

In some cases, features of the marginal distributions of potential outcomes beyond the mean may be of interest. For example, policymakers may care about how the treatment impacts inequality. In other cases, economic theory will make predictions about how different regions of the outcome distribution should be affected by the treatment, so looking at features other than the mean can be used to probe or validate the theory.

Under the unconfoundedness assumption, the marginal distribution of $Y(d)$ can be identified as \[\theta_{0u} = F_{Y(d)}(u) = \E[P]{\gamma_{0u}(d, X)},\] where $\gamma_{0u}(d, x) = F_{Y}(u \mid D = d, X = x) = \PP{Y \leq u \mid D, X}$ is the conditional distribution of $Y$ given $D$ and $X$ at point $u$. A non-parametric estimator of $\gamma_{0u}$ may be constructed using different techniques, for example, distribution regression \citep{chernozhukov2013inference}.

Having access to estimates of $\{\theta_{0u}\}_{u \in \U}$ for a suitable range $\U$ allows to construct many interesting statistics, such as quantile treatment effects, Gini indices, and Oaxaca-Blinder type of decompositions.
\end{example}

Again, our goal in this section is to give finite-sample guarantees for a simultaneous confidence band for $\{\theta_{0u}\}_{u \in \U}$ using a set of DML point estimates $\{\thh_{u}\}_{u \in \U}$ based on an empirical analog of \eqref{eq:momentcont}. We assume that each $\theta_{0u} \in \Theta_u$ for some $\Theta_u \subset \RR$, and that we can find, for each $u \in \U$, an approximate solution to the empirical analog of \eqref{eq:momentcont}, i.e., a $\thh_u$ such that
\begin{equation}
    \En{\psi_u(W, \thh_u, \ehh_u)} \leq n^{-1/2}\epsilon_n.
\end{equation}
for some sequence of $\epsilon_n > 0$ such that $n^{-1/2}\epsilon_n \rightarrow 0$ as $n \rightarrow \infty$. Again, we want to choose a critical value $c_{\alpha}$ such that \[\PP{\sup_{u \in \U} \sqrt{n}\frac{|\thh_{u} - \theta_{0u}|}{\sigma_u} \leq c_\alpha} \approx 1 - \alpha,\] using a normal approximation for the sup-$t$-statistic. As in the previous section, our objective is to provide finite-sample guarantees for this normal approximation in the form of a bound on the Kolmogorov distance between the finite-sample distribution of the sup-$t$-statistic and the supremum of a suitable Gaussian process. We begin by giving some sufficient regularity conditions towards this result.

\begin{assumption}[Moment problem]\label{ass:momentcont}
For all $u \in \U$, the following conditions hold.
\begin{enumerate}[(i)]
    \item The true parameter satisfies $\theta_{0u} \in \mathrm{int} \, \Theta_u$.
    \item The map $(\theta, \eta) \mapsto \E[P]{\psi_u(W, \theta, \eta)}$ is twice continuously Gateaux-differentiable on $\Theta_{u} \times \TT_u$.
    \item (Neyman orthogonality) For $\Bar{r} \in [0,1)$, let $D_{\Bar{r}u}[\eta - \eta_{0u}]$ denote the Gateaux derivative map of $\E[P]{\psi_u(W, \theta, \eta)}$ with respect to $\eta$ at $(\theta_{0u}, \eta_{0u})$ in the direction $\eta - \eta_{0u}$, \[D_{\Bar{r}u}[\eta - \eta_{0u}] = \partial_r \E[P]{\psi_u(W, \theta_{0u}, \eta_{0u} + r(\eta - \eta_{0u}))} |_{r = \Bar{r}}.\] Then, $D_{0u}[\eta - \eta_{0u}] = 0$ for all $\eta \in \TT_u$.
    \item (Bounded derivatives with respect to $\theta$) Let $J_{0u} = \partial_{\theta} \E[P]{\psi_u(W, \theta, \eta_{0u})} |_{\theta = \theta_{0u}}$. Then $c_0 \leq |J_{0u}| \leq C_0$. Moreover, $|\partial_{\theta} \E[P]{\psi_u(W, \theta, \eta_{0u})}| > c_1$ for all $\theta \in \Theta_u$.
    \item (Lipschitz-continuity at the true parameters) For all $\theta \in \Theta_u$ and $\eta \in \TT_u$, \[\E[P]{(\psi_u(W, \theta, \eta) - \psi_u(W, \theta_{0u}, \eta_{0u}))^2} \leq C_0 (|\theta-\theta_{0u}|\vee\Vert\eta - \eta_{0u}\Vert_{P,2})^\omega.\]
    \item (Bounded derivatives with respect to $\eta$) For all $r \in [0,1)$, $\theta \in \Theta_u$ and $\eta \in \TT_u$, \[|\partial_r \E[P]{\psi_u(W, \theta, \eta_{0u} + r(\eta - \eta_{0u}))}| \leq B_{1n}\Vert\eta - \eta_{0u}\Vert_{P,2}. \]
    \item (Bounded second derivatives) For all $r \in [0,1)$, $\theta \in \Theta_u$ and $\eta \in \TT_u$,
    \[|\partial^2_r \E[P]{\psi_u(W, \theta_{0u} + r(\theta - \theta_{0u}), \eta_{0u} + r(\eta - \eta_{0u}))}| \leq B_{2n}(|\theta-\theta_{0u}|\vee\Vert\eta - \eta_{0u}\Vert_{P,2})^2. \]
\end{enumerate}
\end{assumption}

\begin{remark}[On the Neyman orthogonality condition] As opposed to the previous section, here we take Neyman orthonality (iii) as a primitive condition, and hence we assume that $\eta_{0j}$ contains all nuisance parameters needed to make the moment functional Neyman-orthogonal. In the case of a linear, mean-square continuous functional of a regression, the same construction as in \cref{sec:lin} is valid, and so $\eta_{0u} = (\gamma_{0u}, \alpha_{0u})$ for $\alpha_{0u}$ the Riesz representer. This is true, for instance, in \cref{ex:4}. More generally, \citet{chernozhukov2018automatic} discuss how to orthogonalize non-linear functionals, and \citet{chernozhukov2018double}, \citet{belloni2018uniformly} cover many other important models, such as conditional moment restrictions.
\end{remark}

\begin{assumption}[Nuisance parameters]\label{ass:nuiscont} Suppose the following:
\begin{enumerate}[(i)]
    \item (RMSE convergence rates) For all $u \in \U$ we have $\Vert\ehh_u - \eta_{0u}\Vert_{P,2} \leq \R(\ehh)$.
    \item (Envelope and entropy conditions) The class of functions \[\F = \{W \mapsto \psi_u(W, \theta, \eta) \, : \, \theta \in \Theta_u, \eta \in \TT_u, u \in \U\}\] is suitably measurable, with a measurable envelope $F \geq \sup_{f \in \F} |f|$ that satisfies $\Vert F \Vert_{P,2+\delta} \leq M_n$ for some $\delta\geq 0$ and some sequence of constants $M_n$. There exist sequences $v_n \geq 1$, $a_n \geq n \vee M_n$, such that the uniform entropy numbers of $\F$ obey \begin{align*}
   \log \sup_Q N(\varepsilon \Vert F \Vert_{Q,2}, \F, \Vert \cdot \Vert_{Q,2}) \leq v_n \log(a_n/\varepsilon), \qquad \text{for all } 0 < \varepsilon \leq 1.
    \end{align*} Finally, for all $f \in \F$, we have $c_0 \leq \Vert f \Vert_{P,2} \leq C_0$.
\end{enumerate}
\end{assumption}

\begin{assumption}[Entropy and moments of the score at the truth]\label{ass:scorecont} Suppose $\sigma_u^2 = J_{0u}^{-2} \mathrm{E}_P[\psi_u(W, \allowbreak \theta_{0u}, \eta_{0u})^2] \geq C_0^{-2}c_0^2$, and let $\overline{\psi}_{0u}(W) = -(\sigma_{u}J_{0u})^{-1}  \psi_u(W, \allowbreak \theta_{0u}, \eta_{0u})$ denote the re-scaled score evaluated at the true values $(\theta_{0u}, \eta_{0u})$. Then:
\begin{enumerate}[(i)]
    \item (Envelope and entropy conditions) The class of functions \[\F_0 = \{W \mapsto \overline{\psi}_{0u}(W) \, : \,u \in \U\}\] is suitably measurable, with a measurable envelope $F_0 \geq \sup_{f \in \F_0} |f|$ that satisfies $\Vert F_0 \Vert_{P,q} \leq b_n$ for some $q\geq 4$ and some sequence of constants $b_n$. There exist sequences $V_n \geq 1$, $A_n \geq n$, such that the uniform entropy numbers of $\F_0$ obey \begin{align*}
   \log \sup_Q N(\varepsilon \Vert F_0 \Vert_{Q,2}, \F_0, \Vert \cdot \Vert_{Q,2}) \leq V_n \log(A_n/\varepsilon), \qquad \text{for all } 0 < \varepsilon \leq 1.
    \end{align*}

    \item (Moments) For all $f \in \F_0$ and $k = 3,4$, $\E[P]{|f(W)|^k} \leq C_0 b_n^{k-2}$.
\end{enumerate}
\end{assumption}

The following theorem is the second main result of this paper. Again, it provides a bound on the Kolmogorov distance between the finite-sample distribution of the sup-$t$-statistic of the DML estimators and its limiting distribution. In the statement of the theorem, $G_P$ denotes a tight mean-zero Gaussian process indexed by the class of functions in \cref{ass:scorecont}, with covariance function $\E{G_P [\overline{\psi}_{0u}] G_P [\overline{\psi}_{0u'}] } = \E[P]{\overline{\psi}_{0u}(W) \overline{\psi}_{0u'}(W)}$ for all $u,u' \in \U$.

\begin{theorem}\label{thm:cont}
Suppose Assumptions \ref{ass:momentcont}, \ref{ass:nuiscont} and \ref{ass:scorecont} hold. Then,
\begin{multline*}
\sup_{t \in \RR} \left| \PP{\sup_{u \in \U} \sqrt{n}\frac{(\thh_u - \theta_{0u})}{\sigma_u} \leq t} - \P{\sup_{u \in \U} G_P [\overline{\psi}_{0u}] \leq t}\right| \leq \\ \kappa r_{1n} \left(\chi \sqrt{V_n \log(A_n b_n)} + \sqrt{1 \vee \log(1 / r_{1n})}\right) + r_{2n},
\end{multline*}
where $\kappa, \chi > 0$ are universal constants,
\[r_{1n} = c_0^{-1}\epsilon_n + \Delta_{1n} + \Delta_{2n} + \Delta_{3n}\] for:
\begin{align*}
\Delta_{1n} & = C_0^{-1}K\left(2+\delta,\frac{c}{2}\right)\Bigg(
 \sqrt{C_0} [\R^\vee (\ehh)]^{\omega/2}\sqrt{2v_n\log(2a_n)} + 2v_n n^{\frac{1}{2+\delta}-\frac{1}{2}} 2M_n \log(2a_n) \Bigg). \\
\Delta_{2n} & = C_0^{-1} \tfrac{1}{2} \sqrt{n} B_{2n} [\R^\vee (\ehh)]^2. \\
\Delta_{3n} & = \frac{b_nL_n}{\gamma^{1/2} n^{1/2 - 1/q}} + \frac{(b_n)^{1/2}L_n^{3/4}}{\gamma^{1/2} n^{1/4}} + \frac{(b_nL_n^2)^{1/3}}{\gamma^{1/3} n^{1/6}} \\
\R^\vee (\ehh) & = \Big\lbrace c_1^{-1} n^{-1/2}\epsilon_n + c_1^{-1} n^{-1/2}K\left(2 + \delta, \frac{c}{2}\right) \left(C_0 \sqrt{v_n \log(a_n)} + v_n n^{\frac{1}{2+\delta} - \frac{1}{2}}M_n\log(a_n) \right) \\
& + c_1^{-1} B_{1n}\R(\ehh)\Big\rbrace \vee \R(\ehh), \\
L_n & = d(q) V_n(\log n \vee \log(A_n b_n)),
\end{align*}
and \[r_{2n} = D(q)\left(\gamma + \log n / n\right) + c/\log n,\] where $d(q),D(q)$ are constants depending only on $q$, for any $c>0$ and $\gamma \in (0,1)$.
\end{theorem}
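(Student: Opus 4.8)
The plan is to follow the same three-part architecture as in the proof of \cref{thm:linear}: (i) linearize the $t$-statistic around an oracle empirical process, (ii) couple the supremum of that oracle process to the supremum of the Gaussian process $G_P$, and (iii) transfer the coupling to a Kolmogorov bound by anti-concentration. The extra difficulty relative to \cref{thm:linear} is that $\U$ may be uncountable, so the union bounds and high-dimensional CLT of \citet{chernozhukov2021nearly} must be replaced by empirical-process tools indexed by the classes $\F$ and $\F_0$ of Assumptions \ref{ass:nuiscont} and \ref{ass:scorecont}.

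First I would establish the linearization. Writing $\sqrt{n}\En{\psi_u(W,\thh_u,\ehh_u)} = \Gn[\psi_u(W,\thh_u,\ehh_u)] + \sqrt{n}\,\E[P]{\psi_u(W,\theta,\eta)}\big|_{\theta=\thh_u,\eta=\ehh_u}$ and using that the left side is $O(\epsilon_n)$ by the approximate-solution property, I split the empirical-process piece into the oracle score $\Gn[\psi_u(W,\theta_{0u},\eta_{0u})] = -\sigma_u J_{0u}\Gn[\overline{\psi}_{0u}]$ plus an increment $\Gn[\psi_u(W,\thh_u,\ehh_u) - \psi_u(W,\theta_{0u},\eta_{0u})]$. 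A second-order Gateaux expansion of the population moment, together with \eqref{eq:momentcont} and Neyman orthogonality (\cref{ass:momentcont} (iii)), kills the constant and first-order-in-$\eta$ terms and leaves $J_{0u}(\thh_u-\theta_{0u})$ plus a remainder controlled by the second-derivative bound in \cref{ass:momentcont} (vii). Dividing through by $\sigma_u J_{0u}$ and invoking the lower bounds on $\sigma_u$ and $|J_{0u}|$ from (iv) yields, uniformly in $u$, $\sqrt{n}(\thh_u-\theta_{0u})/\sigma_u = \Gn[\overline{\psi}_{0u}] + R_u$, where $\sup_u|R_u|$ collects the $c_0^{-1}\epsilon_n$ term, the equicontinuity increment, and the second-order bias.

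Before the increment can be bounded I need a preliminary uniform rate for $\thh_u$, which is exactly what $\R^\vee(\ehh)$ encodes. Running the same expansion but solving instead for $|\thh_u-\theta_{0u}|$, using the global lower bound $c_1$ on $|\partial_\theta \E[P]{\psi_u}|$ from \cref{ass:momentcont} (iv), the maximal inequality over $\F$ (producing the $\sqrt{v_n\log a_n}$ and $v_n n^{1/(2+\delta)-1/2}M_n\log a_n$ terms), and the first-derivative bound $B_{1n}$ of (vi), gives the stated $\R^\vee(\ehh)$ as an upper bound on $|\thh_u-\theta_{0u}|\vee\Vert\ehh_u-\eta_{0u}\Vert_{P,2}$, valid with probability at least $1-c/\log n$. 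Feeding this rate into the Lipschitz bound \cref{ass:momentcont} (v) controls the $L^2$-radius of the increment class, and the maximal inequality of \citet{chernozhukov2014gaussian} (with constant $K(2+\delta,c/2)$, envelope bound $M_n$, and entropy $(v_n,a_n)$) converts this into $\Delta_{1n}$, while the quadratic term produces $\Delta_{2n}=C_0^{-1}\tfrac12\sqrt{n}B_{2n}[\R^\vee(\ehh)]^2$. Separately, I would invoke the Gaussian coupling for suprema of empirical processes from \citet{chernozhukov2014gaussian}: under the entropy and moment conditions on $\F_0$ in \cref{ass:scorecont}, one can realize a version of $\sup_u G_P[\overline{\psi}_{0u}]$ within $\Delta_{3n}$ of $\sup_u\Gn[\overline{\psi}_{0u}]$ with failure probability at most $D(q)(\gamma+\log n/n)$, the three summands of $\Delta_{3n}$ (in $n^{-1/2+1/q}$, $n^{-1/4}$, $n^{-1/6}$, with $L_n=d(q)V_n(\log n\vee\log(A_nb_n))$) being precisely their coupling rate.

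Combining the linearization with this coupling gives $\big|\sup_u\sqrt{n}(\thh_u-\theta_{0u})/\sigma_u - \sup_u G_P[\overline{\psi}_{0u}]\big|\le r_{1n}$ off an event of probability $r_{2n}=D(q)(\gamma+\log n/n)+c/\log n$. To pass from this uniform coupling to the Kolmogorov distance I would apply the anti-concentration inequality for suprema of Gaussian processes, bounding $\P{|\sup_u G_P[\overline{\psi}_{0u}]-t|\le r_{1n}}$ by $\kappa r_{1n}\big(\mathrm{E}[\sup_u G_P[\overline{\psi}_{0u}]]+\sqrt{1\vee\log(1/r_{1n})}\big)$, and controlling the expected supremum by Dudley's entropy integral, $\mathrm{E}[\sup_u G_P[\overline{\psi}_{0u}]]\le \chi\sqrt{V_n\log(A_nb_n)}$, using once more the entropy of $\F_0$; adding back the coupling failure probability $r_{2n}$ delivers the stated bound. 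I expect the main obstacle to be the preliminary rate $\R^\vee(\ehh)$ and its feedback into $\Delta_{1n}$ and $\Delta_{2n}$: because $\U$ is a continuum the rate must hold uniformly in $u$, forcing the maximal inequality to be applied to the full class $\F$ with the correct envelope and entropy, and the self-referential definition of $\R^\vee(\ehh)$ (the parameter rate depends on the nuisance rate and on the complexity terms, which themselves enter the increment bound) must be unwound carefully to keep every constant explicit. Tracking the universal constants through the coupling and anti-concentration steps so that they collapse to $\kappa,\chi,d(q),D(q)$ is the remaining bookkeeping-heavy part.
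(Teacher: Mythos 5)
Your proposal is correct and follows essentially the same route as the paper's proof: the same oracle decomposition via a second-order Gateaux expansion with Neyman orthogonality, the same preliminary-rate argument yielding $\R^\vee(\ehh)$, the same use of a maximal inequality over $\F$ for the increment and second-order terms ($\Delta_{1n}$, $\Delta_{2n}$), the coupling of $\sup_u \Gn[\overline{\psi}_{0u}]$ to $\sup_u G_P[\overline{\psi}_{0u}]$ from \citet{chernozhukov2014gaussian} for $\Delta_{3n}$, and Dudley plus Gaussian anti-concentration to convert the coupling into the Kolmogorov bound. The only cosmetic difference is that you unpack the anti-concentration step explicitly where the paper invokes Lemma 2.3 of \citet{chernozhukov2014gaussian} as a black box, and you cite the maximal inequality to that paper rather than to Lemma M.2 of \citet{belloni2018uniformly}.
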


\begin{proof}
The full proof is in Appendix \ref{sec:proof2}. As above, we discuss the heuristics here. First, $\R^\vee (\ehh)$ is the maximum of two objects: the rate $\R(\ehh)$ for $\ehh$ and a preliminary rate for $\thh$ (an upper bound for how far $\thh$ can be from $\theta_0$ based only on the smoothness conditions). Notice that this step becomes unnecessary whenever the moment function $\psi_{u}$ is linear in $\theta$, which will be the case in many applications, including \cref{ex:4}.

Second, the Kolmogorov distance-based statement of the theorem is related by Lemma \ref{lem:gausemp2} to another kind of approximation, of the form: \[\PP{\left| \sup_{u \in \U} \sqrt{n}\frac{(\thh_u - \theta_{0u})}{\sigma_u} - G_P [\overline{\psi}_{0u}] \right| > r_{1n}} \leq r_{2n}.\]
As an intermediate step, we first approximate $\sqrt{n}(\thh_u - \theta_{0u})/\sigma_u$ by the empirical process on the re-scaled oracle score, $\Gn[\Bar{\psi}_{0u}]$. With high probability, the distance between these two objects is bounded by $\Delta_{1n} + \Delta_{2n}$. The first term quantifies the size of the deviation between $\psi_u(W, \thh, \ehh)$ and $\psi_u(W, \theta_{0u}, \eta_{0u})$ when $\psi_u(W, \thh, \ehh)$ is in the class of functions $\F$. The second term bounds the error that we incur by linearizing the score. In particular, if the score is linear in both $\theta$ and $\eta$, this term can be ignored.

Finally, the term $\Delta_{3n}$ arises when approximating the supremum of the empirical process on the oracle score, $\Gn[\Bar{\psi}_{0u}]$, by the supremum of the corresponding Gaussian process, $G_P[\Bar{\psi}_{0u}]$, and it is a consequence of \cref{lem:gausemp1}.
\end{proof}

As in the previous section, we give two additional conditions for the asymptotic validity of the simultaneous confidence band. This is also not a new result, but was shown in \citealp{belloni2017program, belloni2018uniformly}. We present it below for completeness.

\begin{assumption}[Consistent variance estimation]\label{ass:var2} Suppose that we have an estimator $\shh_u$ of $\sigma_u$ for all $u \in \U$ such that $\sup_{u \in \U} (\shh_u/\sigma_u) \overset{p}{\rightarrow} 1$.
\end{assumption}

\begin{assumption}[Growth conditions]\label{ass:growth2} Suppose the following growth conditions as $n \rightarrow \infty$:
\begin{enumerate}[(i)]
    \item (Nuisance parameters converge fast enough) $\sqrt{n} [\R(\ehh)]^2 \rightarrow 0$.
    \item (Complexity characteristics do not grow too fast) \[\sqrt{v_n\log(a_n)} [\R^\vee (\ehh)]^{\omega/2}, \quad v_n n^{\frac{1}{2+\delta} - \frac{1}{2}}M_n \log(a_n) \rightarrow 0 \quad \text{and} \quad r_{1n} \sqrt{V_n \log(A_n b_n)} \rightarrow 0.\]
    \item (Moment bounds do not grow too fast)
    \[\frac{b_nL_n}{\gamma^{1/2} n^{1/2 - 1/q}} + \frac{(b_n)^{1/2}L_n^{3/4}}{\gamma^{1/2} n^{1/4}} + \frac{(b_nL_n^2)^{1/3}}{\gamma^{1/3} n^{1/6}} \rightarrow 0.\]
\end{enumerate}
\end{assumption}

\begin{corollary}[Validity of the simultaneous confidence band]
Under Assumptions \ref{ass:momentcont} to \ref{ass:growth2}, we have \[\PP{\thh_u - n^{-1/2}\shh_u c_{\alpha} \leq \theta_{0u} \leq \thh_u + n^{-1/2}\shh_u c_{\alpha}, \, \forall u \in \U} \rightarrow 1 - \alpha.\]
\end{corollary}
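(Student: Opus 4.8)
The plan is to recast the coverage statement as a statement about the two-sided sup-$t$-statistic and then apply \cref{thm:cont}. The coverage event equals $\{T_n \le c_\alpha\}$ with $T_n := \sup_{u \in \U} \sqrt{n}\,|\thh_u - \theta_{0u}|/\shh_u$, and I take $c_\alpha$ to be the $(1-\alpha)$-quantile of the two-sided Gaussian supremum, $c_\alpha = \inf\{t : \P{\sup_{u \in \U} |G_P[\overline{\psi}_{0u}]| \le t} \ge 1-\alpha\}$. Since \cref{thm:cont} controls only the one-sided supremum, I would exploit that its proof factors through (i) a sign-free uniform linearization bounding $\sup_{u}|\sqrt{n}(\thh_u-\theta_{0u})/\sigma_u - \Gn[\overline{\psi}_{0u}]|$ with high probability by $r_{1n}$ (up to $r_{2n}$), and (ii) a Gaussian approximation to $\sup_{f \in \F_0}\Gn[f]$. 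Step (i) is insensitive to signs and, by the reverse triangle inequality, also bounds $\sup_u \big||\sqrt{n}(\thh_u-\theta_{0u})/\sigma_u| - |\Gn[\overline{\psi}_{0u}]|\big|$ by $r_{1n}$; step (ii), fed the symmetrized class $\F_0^{\pm} = \F_0 \cup (-\F_0)$, approximates $\sup_{f \in \F_0^\pm}\Gn[f] = \sup_u |\Gn[\overline{\psi}_{0u}]|$. Because negation leaves the envelope $F_0$ and the moment bounds of \cref{ass:scorecont} unchanged and at most doubles every covering number (an increase of $\log 2$ absorbed by replacing $A_n$ with $2A_n$, harmless since $V_n \ge 1$), this yields
\[\sup_{t \in \RR}\left| \PP{\tilde T_n \le t} - \P{\sup_{u \in \U}|G_P[\overline{\psi}_{0u}]| \le t}\right| \le \varrho_n, \qquad \tilde T_n := \sup_{u \in \U}\sqrt{n}\,\frac{|\thh_u-\theta_{0u}|}{\sigma_u},\]
where $\varrho_n$ denotes the right-hand side of \cref{thm:cont} (with $A_n$ replaced by $2A_n$).

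Next I would dispose of the difference between $\shh_u$ and $\sigma_u$. By \cref{ass:var2}, $\sup_{u}|\shh_u/\sigma_u - 1| = o_P(1)$, so on an event of probability tending to $1$ there is a sequence $\xi_n \downarrow 0$ with $(1-\xi_n)\tilde T_n \le T_n \le (1+\xi_n)\tilde T_n$, using $\inf_u(\sigma_u/\shh_u)\,\tilde T_n \le T_n \le \sup_u(\sigma_u/\shh_u)\,\tilde T_n$ for nonnegative quantities. This gives the sandwich
\[\PP{\tilde T_n \le c_\alpha/(1+\xi_n)} - o(1) \le \PP{T_n \le c_\alpha} \le \PP{\tilde T_n \le c_\alpha/(1-\xi_n)} + o(1).\]
Combining the sandwich with the displayed Kolmogorov bound reduces the claim to showing $\P{\sup_u|G_P[\overline{\psi}_{0u}]| \le c_\alpha/(1\pm\xi_n)} \to 1-\alpha$. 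This is where the key analytic input enters: since $G_P$ is tight, $\E{\sup_u|G_P[\overline{\psi}_{0u}]|} < \infty$, and by Gaussian anti-concentration for suprema of separable Gaussian processes \citep{chernozhukov2014gaussian} the distribution function of $\sup_u|G_P[\overline{\psi}_{0u}]|$ is Lipschitz. Hence this law is atomless, $c_\alpha$ is a genuine $(1-\alpha)$-quantile, and the multiplicative perturbation $c_\alpha/(1\pm\xi_n) \to c_\alpha$ moves the probability by $o(1)$, so the displayed limit holds.

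Finally I would verify $\varrho_n \to 0$. Taking $\gamma \downarrow 0$ and $c$ fixed, $r_{2n} = D(q)(\gamma + \log n/n) + c/\log n \to 0$, and \cref{ass:growth2}(iii) is exactly the statement that the three summands of $\Delta_{3n}$ vanish for such $\gamma$; \cref{ass:growth2}(i)--(ii) force $\Delta_{1n}, \Delta_{2n} \to 0$ and hence $r_{1n} \to 0$, while \cref{ass:growth2}(ii) also gives $r_{1n}\sqrt{V_n \log(A_n b_n)} \to 0$, and since $x\sqrt{1 \vee \log(1/x)} \to 0$ as $x \downarrow 0$ we get $r_{1n}\sqrt{1 \vee \log(1/r_{1n})} \to 0$ as well. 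Thus $\varrho_n \to 0$, and the sandwich yields $\PP{T_n \le c_\alpha} \to 1-\alpha$, which is the claim. I expect the main obstacle to be the variance-replacement/anti-concentration step of the middle paragraph: one must simultaneously guarantee that the limiting supremum is atomless (so $c_\alpha$ is well defined and continuous in the nominal level) and that the random rescaling $\shh_u/\sigma_u$ does not jump the coverage probability, both of which rest on the Lipschitz anti-concentration bound for the Gaussian supremum; the final verification that $\varrho_n \to 0$ is routine bookkeeping given the growth conditions.
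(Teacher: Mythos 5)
Your argument is essentially sound, but note that the paper itself offers no proof of this corollary: it is stated ``for completeness'' with a pointer to \citet{belloni2017program, belloni2018uniformly}, so there is no internal proof to compare against. What you have written is the standard completion of the argument, and two of your observations are genuinely necessary rather than cosmetic. First, \cref{thm:cont} as stated controls only the one-sided supremum $\sup_{u}\sqrt{n}(\thh_u-\theta_{0u})/\sigma_u$, while the band requires the two-sided statistic; your fix --- rerunning the proof with the symmetrized class $\F_0\cup(-\F_0)$, whose envelope and third/fourth moment bounds are unchanged and whose covering numbers at most double, combined with the sign-free linearization bound $\sup_u|\sqrt{n}(\thh_u-\theta_{0u})/\sigma_u-\Gn[\overline{\psi}_{0u}]|\leq r_{1n}$ --- is the right one, and a naive union bound over the two tails would not recover exact asymptotic coverage $1-\alpha$. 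Second, your bookkeeping for the vanishing of the Kolmogorov bound correctly reads \cref{ass:growth2} as requiring $\gamma=\gamma_n\downarrow 0$ along a sequence for which part (iii) still holds; with $\gamma$ held fixed, $r_{2n}$ does not vanish, a point the paper leaves implicit.

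The one place the argument is thinner than it should be is the step you yourself flag. The perturbation of the critical value induced by \cref{ass:var2} is of size $c_\alpha\xi_n$, and the anti-concentration constant for $\sup_{u}|G_P[\overline{\psi}_{0u}]|$ is proportional to $1+\mathrm{E}\left[\sup_{u}|G_P[\overline{\psi}_{0u}]|\right]$, which by Dudley's bound is only controlled by $\sqrt{V_n\log(A_nb_n)}$. If the index class $\F_0$ is fixed, this is a constant, the law of the supremum is atomless with a finite quantile $c_\alpha$, and $\xi_n\to 0$ suffices, exactly as you argue. If instead $\F_0$ drifts with $n$ (as the notation $b_n$, $V_n$, $A_n$ invites), \cref{ass:var2} supplies no rate for $\xi_n$, and closing the sandwich requires something like $\xi_n\, c_\alpha\sqrt{V_n\log(A_nb_n)}\to 0$. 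This gap is inherited from the paper's assumptions rather than introduced by you, but it should be stated as an explicit condition rather than absorbed into the assertion that the limiting law is Lipschitz.
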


\section{Conclusions}
In many applications, researchers are interested in the causal impact of a treatment or policy that was not randomly assigned. Inference in such non-experimental settings is still possible by controlling for a set of covariates, conditional on which the treatment becomes plausibly exogenous. In modern settings, DML offers an alternative way to leverage a large number of potential controls with regularization and model selection, which does not bias the estimates of the target parameters thanks to a Neyman orthogonality condition. Often, we want to make joint inference on a high-dimensional set of parameters, such as the ATE of many treatments or combinations thereof, the ATE of a treatment on many outcomes, or effects on the entire marginal distribution of the potential outcomes. In this paper we have complemented existing asymptotic results for high-dimensional DML \citep{belloni2017program, belloni2018uniformly} with finite-sample guarantees. These finite sample guarantees can be useful to applied researchers, as they are informative of how far off the coverage of simultaneous confidence bands can be from the nominal levels.

There is one natural extension of the paper that would be an interesting avenue for future research. Our guarantees are based on the true standard error, $\sigma_j$ or $\sigma_u$, respectively, which in general will not be known and will have to be estimated. For our asymptotic corollaries, we have simply assumed that estimators $\shh_j$ or $\shh_u$ exist. We leave it to future work to provide finite-sample guarantees for variance estimation in the high dimensional case, although we note that such guarantees are available for one-dimensional DML with sample splitting \citep{chernozhukov2021simple}.

\clearpage
\bibliography{references}

\clearpage

\begin{appendices}
\section{Proofs}
\subsection{Proof of \cref{thm:linear}}\label{sec:proof1}
The DML estimator of $\theta_{0j}$ is, for all $j = 1, \, \ldots, \, p$, \[\thh_j = \En{m(W, \ghh_j) + \ahh_j(W)(Y - \ghh_j(W))}.\]
Decompose it as in \citet{chernozhukov2021simple}: \begin{align}\label{eq:decomp}
    \sqrt{n}(\thh_j - \theta_{0j}) & =  \sqrt{n}(\tbb_j - \theta_{0j}) + A_{jn} + B_{jn} + C_{jn} + D_{jn},
\end{align}
where: \begin{enumerate}[(i)]
    \item $\tbb_j = \En{m(W, \gamma_{0j}) + \alpha_{0j}(W)(Y - \gamma_{0j}(W))}$ is the oracle estimator.

    \item $A_{jn} = \Gn [m(W, \ghh_j - \gamma_{0j}) + \alpha_{0j}(W)(\ghh_j(W) - \gamma_{0j}(W))]$,

    \item $B_{jn} = \Gn [(\ahh_j(W) - \alpha_{0j}(W))(Y - \gamma_{0j}(W))]$,

    \item $C_{jn} = \Gn [-(\ahh_j(W) - \alpha_{0j}(W))(\ghh_j(W) - \gamma_{0j}(W))]$,

    \item $D_{jn} = - \sqrt{n} \, \E[P]{(\ahh_j(W) - \alpha_{0j}(W))(\ghh_j(W) - \gamma_{0j}(W))}$.
\end{enumerate}

The first term in the right-hand side of \eqref{eq:decomp} is an average of random variables. We can quantify, using a multivariate Berry-Essen inequality, how far its finite-sample distribution is from that of a Gaussian random variable. The terms $A_{jn}$, $B_{jn}$ and $C_{jn}$ are more challenging to bound, since they are empirical process of functions that themselves depend on the data through the estimated nuisance parameters $\ghh_j$ and $\ahh_j$. \citet{chernozhukov2021simple} proceed by using cross-fitting --- i.e., splitting the sample in folds and, for each fold, use an estimate of $\gamma_{0j}$ and $\alpha_{0j}$ based on the remaining data. In this paper, we employ empirical process inequalities instead. Finally, the last term, $D_{jn}$, can be bounded using simple moment inequalities.

\paragraph{Bounds on $A_{jn}$, $B_{jn}$ and $C_{jn}$}
We need to obtain high-probability bounds for $A_{jn}, B_{jn}, C_{jn}$ that hold uniformly over $j = 1, \, \ldots, \, p$. To do so, we will apply \cref{lem:max} to the following empirical processes:  \[\max_{1 \leq j \leq p} |A_{jn}| \leq \sup_{f \in \F_A} \Gn[f], \quad \max_{1 \leq j \leq p} |B_{jn}| \leq \sup_{f \in \F_B} \Gn[f], \quad \max_{1 \leq j \leq p} |C_{jn}| \leq \sup_{f \in \F_C} \Gn[f],\]
with the corresponding classes of functions:
\begin{align*}
\F_A = \lbrace & m_j(W, \gamma - \gamma_{0j}) - \alpha_{0j}(W)(\gamma(W) - \gamma_{0j}(W)) \, : \, \Vert \gamma - \gamma_{0j} \Vert_{P,2} \leq \R(\ghh), j = 1, \, \ldots, \, p \rbrace, \\ \F_B = \lbrace & (\alpha(W) - \alpha_{0j}(W))(Y - \gamma_{0j}(W)) \, : \, \Vert \alpha - \alpha_{0j} \Vert_{P,2} \leq \R(\ahh), j = 1, \, \ldots, \, p \rbrace, \\
\F_C = \lbrace & -(\alpha(W) - \alpha_{0j}(W))(\gamma(W) - \gamma_{0j}(W)) \, : \, \Vert \gamma - \gamma_{0j} \Vert_{P,2} \leq \R(\ghh), \\ & \Vert \alpha - \alpha_{0j} \Vert_{P,2} \leq \R(\ahh), j = 1, \, \ldots, \, p \rbrace.
\end{align*}

Recall the class of functions defined in \cref{ass:nuisance},
\begin{align*}
\F = \{ & (m_j(W, \gamma) + \alpha(Y - \gamma) - m_j(W, \gamma_{0j}) - \alpha_{0j}(Y - \gamma_{0j}) \, : \\ & \Vert \gamma - \gamma_{0j} \Vert_{P,2} \leq \R(\ghh), \Vert \alpha - \alpha_{0j} \Vert_{P,2} \leq \R(\ahh), j = 1, \, \ldots, \, p \}.
\end{align*}
We have that $\F_A \subset \F$, $\F_B \subset \F$ and $\F_C \subset \F - \F_A - \F_B$. By assumption, there exists an envelope function $F$ for $\F$, such that $F \geq \sup_{f\in\F} |f|$ with $\Vert F \Vert_{P,2+\delta} \leq M_n$ for some $\delta \geq 0$. Hence, the corresponding envelopes for $\F_A$, $\F_B$ and $\F_C$ are $F$, $F$ and $3F$. By Assumptions \ref{ass:linearity}, \ref{ass:mom} and \ref{ass:nuisance}, the classes $\F_A$, $\F_B$ and $\F_C$ satisfy $\sup_{f \in \F_k} \Vert f \Vert^2_{P,2} \leq \sigma^2_k$ for $k = A, B, C$, where:
\[\sigma_A^2 = 2(\Bar{Q}^2 + \Bar{\alpha}^2) \R(\ghh)^2, \quad \sigma_B^2 = \Bar{\sigma}^2 \R(\ahh)^2, \quad \sigma_C^2 = 4 \Bar{\alpha}^2 \R(\ghh)^2.\]

Finally, the uniform covering entropy of $\F_A$ and $\F_B$ is upper bounded by that of $\F$. For $\F_C$, we can obtain an upper bound using \cref{lem:entr}:
\[
\log \sup_Q N(\varepsilon \Vert 3F\Vert_{Q,2}, \F_C, \Vert \cdot \Vert_{Q,2}) \leq 3v_n \log(3a_n/\varepsilon) \quad \text{ for all } 0 < \varepsilon \leq 1.
\]

By \cref{lem:max}, we conclude that, with probability at least $1 - c/\log n$,
\begin{multline*}
 \max_{1 \leq j \leq p} |A_{jn}| + \max_{1 \leq j \leq p} |B_{jn}| + \max_{1 \leq j \leq p} |C_{jn}| \leq \\ \Delta_{1n} = K\left(2+\delta,\frac{c}{3}\right)\Bigg( \left[\left((2+\sqrt{2})\Bar{\alpha} + \sqrt{2} \Bar{Q} \right)\R(\ghh) + \Bar{\sigma}\R(\ahh) \right]\sqrt{3v_n\log(3a_n)} \\ + \  3v_n n^{\frac{1}{2+\delta}-\frac{1}{2}} 5M_n \log(3a_n) \Bigg).  \end{multline*}

\paragraph{Bound on $D_{jn}$}
By the Cauchy-Schwarz inequality, $|D_{jn}| \leq \Delta_{2n} = \sqrt{n} \R(\ghh) \R(\ahh)$ for each $j = 1, \, \ldots, \, p$. Hence, $\max_{1 \leq j \leq p} |D_{jn}| \leq \Delta_{2n}$.

\paragraph{Normal Approximation}
Combining the two bounds, we have:
\begin{gather*}
\left| \PP{\max_{1 \leq j \leq p} \sqrt{n}\frac{(\thh_j - \theta_{0j})}{\sigma_j} \leq z} - \P{\max_{1 \leq j \leq p} Z_j \leq z}\right| \hspace{10em} \\ \leq \left| \PP{\max_{1 \leq j \leq p} \sqrt{n}\frac{(\tbb_j - \theta_{0j})}{\sigma_j} \leq z + \frac{\Delta_n}{\sigma_{\min}}} - \P{\max_{1 \leq j \leq p} Z_j \leq z + \frac{\Delta_n}{\sigma_{\min}}}\right| \\ + \left| \P{\max_{1 \leq j \leq p} Z_j \leq z + \frac{\Delta_n}{\sigma_{\min}}} - \P{\max_{1 \leq j \leq p} Z_j \leq z}\right| + \frac{c}{\log n},
\end{gather*}
where $\Delta_n = \Delta_{1n} + \Delta_{2n}$ and $\bm Z = (Z_1, \ldots, Z_p) \sim \mathcal{N}(0, \Sigma)$, for $\Sigma$ and $\sigma_{\min}$ defined in \cref{ass:mom}.

By the results of \citet{chernozhukov2021nearly}, which we collect in \cref{lem:gaus}, we have that the first term on the right-hand side of the inequality is bounded above by:
\begin{multline*}
 \sup_{t \in \RR}\left| \PP{\max_{1 \leq j \leq p} \sqrt{n}\frac{(\thh_j - \theta_{0j})}{\sigma_j} \leq t} - \P{\max_{1 \leq j \leq p} |Z_j| \leq t}\right| \leq \\ C(q) \left\lbrace \frac{b_n (\log p)^{3/2} \log n}{\sqrt{n} \lambda_{\min}} + \frac{b_n^2 (\log p)^{2} \log n}{n^{1-2/q} \lambda_{\min}} + \left[\frac{b_n^q (\log d)^{3q/2-4} \log n \log (pn)}{n^{q/2 - 1} (\lambda_{\min})^{q/2}} \right]^\frac{1}{q-2}\right\rbrace,
\end{multline*}
for a constant $C(q) > 0$ depending only on $q$, where the quantities $b_n$, $q$ and $\lambda_{\min}$ are defined in \cref{ass:mom}.

To bound the second term, we invoke an anti-concentration inequality (\cref{lem:antic}). Since the $Z_j$ are centered and unit-variance, we get: \begin{align*}
  \sup_{t \in \RR}\left| \P{\max_{1 \leq j \leq p} Z_j \leq z + \frac{\Delta_n}{\sigma_{\min}}} - \P{\max_{1 \leq j \leq p} Z_j \leq z}\right| & =  \sup_{x \in \RR}\P{\max_{1 \leq j \leq p} Z_j \in \left[x \mp \frac{\Delta_n}{2\sigma_{\min}}\right]} \\ & \leq \frac{6\sqrt{\log p}}{\sigma_{\min}}\Delta_n.
\end{align*}

\subsection{Proof of \cref{thm:cont}}\label{sec:proof2}
As in the proof of \cref{thm:linear}, we begin by approximating $\thh_u$ by an ``oracle'' estimator.
To do that, fix $u \in \U$ and consider a second-order Taylor expansion of the function
\begin{equation}\label{eq:taylor}
  f(r) = \E[P]{\psi_u(W, \theta_{0u} + r(\thh_u - \theta_{0u}), \eta_{0u} + r(\ehh_u - \eta_{0u}))}
\end{equation} around $r = 0$:
\begin{align*}
    f(1) - f(0) & = \E[P]{\psi_u(W, \thh_u, \ehh_u)} - \E[P]{\psi_u(W, \theta_{0u}, \eta_{0u}))} \\ & = J_{0u} (\thh_u - \theta_{0u}) + D_{0u} [\ehh_u - \eta_{0u}] + \tfrac{1}{2}f''(\Bar{r})
\end{align*} for some $r \in [0,1]$. Because $\E[P]{\psi_u(W, \theta_{0u}, \eta_{0u}))} = 0$ and $D_{0u} [\ehh_u - \eta_{0u}] = 0$ because of the Neyman-orthogonality condition, we have \[\sqrt{n}(\thh_{u} - \theta_{0u}) = \sqrt{n} J_{0u}^{-1}\E[P]{\psi_u(W, \thh_u, \ehh_u)} - \sqrt{n}J_{0u}^{-1}\tfrac{1}{2}f''(\Bar{r}).\]
We can further expand the right hand side to obtain:
\begin{equation}\label{eq:exp2}
\sqrt{n}(\thh_{u} - \theta_{0u}) = -J_{0u}^{-1} \Gn [\psi_u(W, \theta_{0u}, \eta_{0u})] + A_{un} + B_{un} + C_{un},
\end{equation}
where:

\begin{enumerate}[(i)]
    \item The first term on the right-hand side is the oracle score,

    \item $A_{un} = \sqrt{n} J_{0u}^{-1} \En{\psi_u(W, \thh_u, \ehh_u)}$, which is related to the approximation error in $\thh_u$,

    \item $B_{un} = -J_{0u}^{-1}\Gn[\psi_u(W, \thh_u, \ehh_u) - \psi_u(W, \theta_{0u}, \eta_{0u})]$, the empirical process controlling the deviation of $\psi_u(W, \thh_u, \ehh_u)$ from $\psi_u(W, \theta_{0u}, \eta_{0u})$,

    \item $C_{un} = -\sqrt{n}J_{0u}^{-1}\tfrac{1}{2}f''(\Bar{r})$ is a linearization error.
\end{enumerate}

\paragraph{Bound on $A_{un}$}
By \cref{ass:momentcont} $\left| J_{0u} \right| \geq c_0$ for all $u \in \U$. Moreover, because $\thh_u$ is an $n^{-1/2}\epsilon_n$-approximate solution to $\En{\psi(W, \theta, \ehh_u)} = 0$ for all $u \in \U$, we have $\sup_{u\in\U} \left| A_{un} \right| \leq c_0^{-1} \epsilon_n$.

\paragraph{Bound on $B_{un}$} We use a similar argument as before. Let
\begin{align*}
 \F_B = \{\psi_u(W, \theta, \eta) - \psi_u(W, \theta_{0u}, \eta_{0u}) \, : \, & |\theta - \theta_{0u}| \leq \R(\thh), \theta \in \Theta_u, \\ & \Vert \eta - \eta_{0u}\Vert_{P,2} \leq \R(\ehh), u \in \U\},
\end{align*}
where $\R(\thh) = \sup_{u \in U} |\thh_{u} - \theta_{0u}|$ is a preliminary rate of convergence for $\{\thh_{u}\}_{u \in \U}$. A bound on this quantity will be obtained below.

We have that $\F_B \subset \F - \F$ for the class $\F$ defined in \cref{ass:nuiscont}. By assumption, there exists an envelope function $F$ for $\mathcal{F}$, such that $F \geq \sup_{f\in\F} |f|$ with $\Vert F \Vert_{P,2+\delta} \leq M_n$ for some $\delta \geq 0$. Hence, the corresponding envelope $\F_B$ is $2F$. Moreover, the uniform covering entropy of $\F_B$ can be upper-bounded by \cref{lem:entr}:
\[
\log \sup_Q N(\varepsilon \Vert 2F\Vert_{Q,2}, \F_B, \Vert \cdot \Vert_{Q,2}) \leq 2v_n \log(2a_n/\varepsilon) \quad \text{ for all } 0 < \varepsilon \leq 1.
\]
Moreover, by the Lipschitzness-continuity condition in \cref{ass:momentcont}, we can bound
\begin{align*}
\sup_{f \in \F_B} \Vert f \Vert_{P,2} & = \sup_{
   |\theta - \theta_{0u}| \leq \R(\thh), \theta \in \Theta_u, \Vert \eta - \eta_{0u}\Vert_{P,2} \leq \R(\ehh), u \in \U} \E[P]{(\psi_u(W, \theta, \eta) - \psi_u(W, \theta_{0u}, \eta_{0u}))^2} \\
   & \leq C_0 \sup_{
   |\theta - \theta_{0u}| \leq \R(\thh), \theta \in \Theta_u, \Vert \eta - \eta_{0u}\Vert_{P,2} \leq \R(\ehh), u \in \U} (|\theta - \theta_{uj}| \vee \Vert \eta - \eta_{0u}\Vert_{P,2})^{\omega} \\
   & = C_0 [\R(\thh) \vee \R(\ehh)]^\omega.
\end{align*}

By \cref{lem:max}, we conclude that, with probability at least $1 - c/(2\log n)$,
\begin{multline*}
 \sup_{u \in U}|B_{un}| \leq c_0^{-1}K\left(2+\delta,\frac{c}{2}\right)\Bigg(
 \sqrt{C_0} [\R(\thh) \vee \R(\ehh)]^{\omega/2}\sqrt{2v_n\log(2a_n)} \\ + \  2v_n n^{\frac{1}{2+\delta}-\frac{1}{2}} 2M_n \log(2a_n) \Bigg).
\end{multline*}

\paragraph{Bound on $C_{un}$}
By the smoothness condition in \cref{ass:momentcont}, we have that
$\sup_{u \in \U} |f''(\Bar{r})| \leq B_{2n} \sup_{u \in \U} (|\thh_{u} - \theta_{0u}| \vee \Vert \eta - \eta_{0u}\Vert_{P,2})^2 \leq B_{2n} [\R(\thh) \vee \R(\ehh)]^2$. Hence, \[\sup_{u \in \U} |C_{un}| \leq c_0^{-1} \tfrac{1}{2} \sqrt{n} B_{2n} [\R(\thh) \vee \R(\ehh)]^2.\]

\paragraph{Preliminary rate} Here we obtain an upper bound for $\R(\thh)$ in terms of $\R(\ehh)$ and other primitive quantities. To that end, consider now a first-order Taylor expansion of \eqref{eq:taylor} around $r = 0$:
\begin{align*}
 f(1) - f(0) & = \E[P]{\psi_u(W, \thh_u, \ehh_u)} - \E[P]{\psi_u(W, \theta_{0u}, \eta_{0u}))} \\ & = J_{\Bar{r} u} (\thh_u - \theta_{0u}) + D_{\Bar{r} u} [\ehh_u - \eta_{0u}]
\end{align*}
for some $\Bar{r} \in [0,1]$.
By the smoothness conditions, $|D_{\Bar{r} u} [\ehh_u - \eta_{0u}]| \leq B_{1n} \Vert \eta - \eta_{0u}\Vert_{P,2} \leq B_{1n}\R(\ehh)$ for all $u \in \U$. Moreover, $|J_{\Bar{r} u}| \geq c_{1}$ for any $\Bar{r} \in [0,1)$ and $u \in U$. Re-arranging, and adding and subtracting $\En{\psi_u(W, \thh_u, \ehh_u)}$ we obtain
\begin{align*}
\sup_{u \in \U}|\thh_u - \theta_{0u}| & \leq c_1^{-1} \left|\En{\psi_u(W, \thh_u, \ehh_u)}\right| + c_1^{-1} n^{-1/2}\sup_{u \in \U} |\Gn \psi(W, \thh_u, \ehh_u)| + c_1^{-1} B_{1n}\R(\ehh) \\
& \leq c_1^{-1} n^{-1/2}\epsilon_n + c_1^{-1} n^{-1/2}\sup_{f \in \F} |\Gn f| + c_1^{-1} B_{1n}\R(\ehh)
\end{align*}
for the class $\F$ defined in \cref{ass:nuiscont}. By \cref{lem:max}, with probability at least $1 - c/(2\log n)$
\[\sup_{f \in \F} |\Gn f| \leq K\left(2 + \delta, \frac{c}{2}\right) \left(C_0 \sqrt{v_n \log(a_n)} + v_n n^{\frac{1}{2+\delta} - \frac{1}{2}}M_n\log(a_n) \right).\]

Below, we denote:
\begin{multline*}
 \R^\vee (\ehh) = \Big\lbrace c_1^{-1} n^{-1/2}\epsilon_n + c_1^{-1} n^{-1/2}K\left(2 + \delta, \frac{c}{2}\right) \left(C_0 \sqrt{v_n \log(a_n)} + v_n n^{\frac{1}{2+\delta} - \frac{1}{2}}M_n\log(a_n) \right) \\ + c_1^{-1} B_{1n}\R(\ehh)\Big\rbrace \vee \R(\ehh).
\end{multline*}
\begin{align*}
\Delta_{1n} & = C_0^{-1}K\left(2+\delta,\frac{c}{2}\right)\Bigg(
 \sqrt{C_0} [\R^\vee (\ehh)]^{\omega/2}\sqrt{2v_n\log(2a_n)} + 2v_n n^{\frac{1}{2+\delta}-\frac{1}{2}} 2M_n \log(2a_n) \Bigg). \\
\Delta_{2n} & = C_0^{-1} \tfrac{1}{2} \sqrt{n} B_{2n} [\R^\vee (\ehh)]^2.
\end{align*}

\paragraph{Normal Approximation} To obtain bounds on the Kolmogorov distance between the finite-sample distribution of $\{\thh_{u}\}_{u \in \U}$ and that of the corresponding Gaussian process $\{Z_u\}$ we will use the results of \citet{chernozhukov2014gaussian}, combining \cref{lem:gausemp1} and \cref{lem:gausemp2}.

Let $Z = \sup_{u \in \U} \sqrt{n}\sigma_{u}^{-1}(\thh_{u} - \theta_{0u})$, $\overline{Z} = \sup_{u \in \U} \Gn [\overline{\psi}_{0u}]$, where $\overline{\psi}_{0u}(W) = -(\sigma_{u}J_{0u})^{-1}  \psi_u(W, \allowbreak \theta_{0u}, \eta_{0u})$ is the re-scaled score at the true values $(\theta_{0u}, \eta_{0u})$, and $\widetilde{Z} = \sup_{u \in \U} G_P [\overline{\psi}_{0u}]$, for a tight mean-zero Gaussian process $G_P$ with covariance function $\E{G_P [\overline{\psi}_{0u}] G_P [\overline{\psi}_{0u'}] } = \E[P]{\overline{\psi}_{0u}(W) \overline{\psi}_{0u'}(W)}$. Notice that, by construction, $\E[P]{\overline{\psi}_{0u}(W)^2} = 1$ for all $u \in \U$.

By \cref{lem:gausemp1} and \cref{ass:scorecont}, we have, for any $\gamma \in (0,1)$, $\PP{|\overline{Z} - \widetilde{Z}| > \Delta_{3n}} \leq D(q) \left(\gamma + \log n / n\right)$, where
\begin{align*}
    \Delta_{3n} = \frac{b_nL_n}{\gamma^{1/2} n^{1/2 - 1/q}} + \frac{(b_n)^{1/2}L_n^{3/4}}{\gamma^{1/2} n^{1/4}} + \frac{(b_nL_n^2)^{1/3}}{\gamma^{1/3} n^{1/6}},
\end{align*} $L_n = d(q) V_n(\log n \vee \log(A_n b_n))$ and $d(q), D(q)$ are constants that depend only on $q$.

Combining that with the previous bounds, and using the triangle inequality, we have $\PP{|Z - \widetilde{Z}| > c_0^{-1}\epsilon_n + \Delta_{1n} + \Delta_{2n} + \Delta_{3n}} \leq D(q)\left(\gamma + \log n / n\right) + c/\log n$.

Finally, Dudley's Theorem (see, e.g., Corollary 2.2.8 in \citet{vaart1997weak}) implies that, for our class of functions,
\[\mathrm{E}[\widetilde{Z}] \leq \chi \sqrt{V_n \log(A_n b_n)} \] for an absolute constant $\chi > 0$.

By \cref{lem:gausemp2},
\[\sup_{t}\left|\PP{Z \leq t} - \P{\widetilde{Z} \leq t}\right| \leq \kappa r_{1n} \left(\chi \sqrt{V_n \log(A_n b_n} + \sqrt{1 \vee \log(1 / r_{1n})}\right) + r_{2n},\]
for an absolute constant $\kappa > 0$, where $r_{1n} = c_0^{-1}\epsilon_n + \Delta_{1n} + \Delta_{2n} + \Delta_{3n}$ and $r_{2n} = D(q)\left(\gamma + (\log n) / n\right) + c/\log n$.

\section{Auxiliary Results}
\begin{lemma}[Maximal inequality, Lemma M.2 of \citealp{belloni2018uniformly}]\label{lem:max}
Let $\F$ be a set of suitably measurable functions $f: \mathcal{W} \rightarrow \RR$, equipped with a measurable envelope $F: \mathcal{W} \rightarrow \RR$, $F \geq \sup_{f\in\F}|f|$, such that $\Vert F \Vert_{P,q} \leq M < \infty$ for some $q \geq 2$. Let $\sigma^2$ be any positive constant such that $\sup_{f\in\F} \Vert f \Vert_{P,2}^2 \leq \sigma^2 \leq \Vert F \Vert_{P,2}^2$. Suppose that there exist constants $a \geq e$ and $v \geq 1$ such that:
\begin{align*}
   \log \sup_Q N(\varepsilon \Vert F \Vert_{Q,2}, \F, \Vert \cdot \Vert_{Q,2}) \leq v \log(a/\varepsilon), \qquad \text{for all } 0 < \varepsilon \leq 1.
\end{align*}

Then, with probability at least $1 - c/\log n$, \[\sup_{f \in \F}|\Gn f| \leq K(q,c) \left(\sigma \sqrt{v \log(a)} + vn^{1/q - 1/2}M\log(a) \right),\] where $K(q,c) > 0$ is a constant depending only on $q$ and $c$.
\end{lemma}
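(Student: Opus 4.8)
The plan is to establish the bound in two stages: first control the expectation $\mathrm{E}_P\|\Gn\|_\F$, writing $\|\Gn\|_\F = \sup_{f\in\F}|\Gn f|$, and then upgrade this in-expectation bound to the stated high-probability statement. Both stages are standard empirical-process arguments for VC-type classes; the result is essentially Lemma M.2 of \citet{belloni2018uniformly}, itself built on the maximal inequalities of \citet{chernozhukov2014gaussian}, so my strategy mirrors theirs.

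For the expectation, I would first symmetrize, reducing $\mathrm{E}_P\|\Gn\|_\F$ (up to a factor $2$) to the expected supremum of the Rademacher process $f\mapsto n^{-1/2}\sum_{i=1}^n \epsilon_i f(W_i)$. Conditionally on the sample this process is sub-Gaussian with respect to the $L^2(\mathbb{P}_n)$ pseudometric, so Dudley's entropy integral applies. The hypothesized polynomial uniform-entropy bound $v\log(a/\varepsilon)$ makes the integral converge and contributes $\sqrt{v\log(a\Vert F\Vert_{P,2}/\sigma)}$, which reduces to the stated $\sqrt{v\log a}$ under the mild normalization $\sigma \geq \Vert F\Vert_{P,2}/a$ (harmless in the applications, where $a$ is large); scaling by the $L^2$ radius of $\F$ gives the leading term $\sigma\sqrt{v\log a}$. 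To obtain the genuine weak variance $\sigma$ out front rather than the larger $\Vert F\Vert_{P,2}$, and to accommodate an envelope that is only $L^q$-integrable, I would use the refined/localized maximal inequality in the style of Theorem 2.14.2 of \citet{vaart1997weak}, sharpened by \citet{chernozhukov2014gaussian}: truncate the envelope at a level of order $\sqrt n\,\sigma/\sqrt{v\log a}$, bound the truncated part by chaining and the discarded part using $\Vert F\Vert_{P,q}\leq M$ together with $\Vert \max_{i\leq n} F(W_i)\Vert_{P,q}\leq n^{1/q}M$. This second contribution is exactly what yields $vn^{1/q-1/2}M\log a$, so that $\mathrm{E}_P\|\Gn\|_\F \lesssim \sigma\sqrt{v\log a}+vn^{1/q-1/2}M\log a$.

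To pass to high probability, I would apply a Talagrand-type concentration inequality for suprema of empirical processes. Because the envelope is merely $L^q$-integrable, the clean bounded or sub-Gaussian versions do not apply directly; instead I would invoke a version valid under an $L^q$ envelope (again available in \citet{chernozhukov2014gaussian}), which controls the deviation of $\|\Gn\|_\F$ above its mean by a sub-exponential piece with variance proxy of order $\sigma^2$ plus a polynomial piece governed by $\Vert \max_{i\leq n}F(W_i)\Vert_{P,q}$. Setting the deviation level so that the total failure probability equals $c/\log n$, and absorbing the resulting $\log(\log n/c)$ and $(\log n/c)^{1/q}$ factors into the constant $K(q,c)$ and into the $\log a$ terms — legitimate in the regime $a\gtrsim n$ relevant to the applications — then delivers the stated bound.

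The main obstacle is precisely this last, in-probability, step under heavy-tailed (only $L^q$) envelopes: one must truncate and balance the sub-Gaussian and polynomial tail contributions so that the threshold remains a fixed multiple of the in-expectation bound while the failure probability is driven down to the slowly vanishing level $c/\log n$. The chaining step, by contrast, is routine once the envelope truncation is in place. I would therefore concentrate the effort on the bookkeeping of the $q$-dependent constants and on verifying that the confidence-level logarithmic factors are dominated by $\log a$ in the relevant regime.
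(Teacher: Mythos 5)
The paper does not actually prove this lemma---it is imported verbatim as Lemma M.2 of Belloni, Chernozhukov, Chetverikov and Wei (2018)---and your two-stage argument (symmetrization plus Dudley chaining with envelope truncation for $\mathrm{E}_P\|\Gn\|_{\F}$, followed by a Fuk--Nagaev/Talagrand-type deviation inequality for $L^q$ envelopes with the deviation level $t\asymp(\log n/c)^{2/q}$) is precisely the proof of that cited result, so you are taking essentially the same route. The bookkeeping you flag is indeed the only delicate point: the chaining bound naturally produces $\log(a\Vert F\Vert_{P,2}/\sigma)$ rather than $\log a$, and the $(\log n/c)^{1/q}$ and $(\log n/c)^{2/q}$ factors from the tail bound must be absorbed using $a\geq n$, $v\geq 1$ and $q\geq 2$, exactly as you describe.
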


\begin{lemma}[Algebra for covering entropies, Lemma L.1 of \citealp{belloni2018uniformly}]\label{lem:entr}
For any measurable classes of functions $\F$ and $\F'$ mapping $\mathcal{W}$ to $\RR$,
\begin{multline*}
 \log N(\varepsilon \Vert F + F'\Vert_{Q,2}, \F + \F', \Vert \cdot \Vert_{Q,2}) \leq \\ \log N(\tfrac{\varepsilon}{2} \Vert F\Vert_{Q,2}, \F, \Vert \cdot \Vert_{Q,2})  + \log N(\tfrac{\varepsilon}{2} \Vert F'\Vert_{Q,2}, \F', \Vert \cdot \Vert_{Q,2}).
\end{multline*}
\end{lemma}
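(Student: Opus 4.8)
The plan is to build an explicit cover of $\F + \F'$ out of minimal covers of the two constituent classes and then simply count the balls it contains. Throughout I would fix the finitely discrete measure $Q$ (the inequality is stated for a single $Q$, so no supremum enters the argument), and observe first that $F + F'$ is a legitimate envelope for $\F + \F'$, since $|f + f'| \leq |f| + |f'| \leq F + F'$ for every $f \in \F$ and $f' \in \F'$. Abbreviate $N_1 = N(\tfrac{\varepsilon}{2}\Vert F\Vert_{Q,2}, \F, \Vert \cdot \Vert_{Q,2})$ and $N_2 = N(\tfrac{\varepsilon}{2}\Vert F'\Vert_{Q,2}, \F', \Vert \cdot \Vert_{Q,2})$; if either is infinite the claim is trivial, so I assume both are finite and fix centers $\{g_i\}_{i=1}^{N_1} \subset \F$ and $\{g_j'\}_{j=1}^{N_2} \subset \F'$ realizing the two covers.

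Next I would show that the $N_1 N_2$ pairwise sums $\{g_i + g_j'\}$ form an $\varepsilon\Vert F + F'\Vert_{Q,2}$-cover of $\F + \F'$. Take an arbitrary $h = f + f' \in \F + \F'$ and select centers $g_i, g_j'$ with $\Vert f - g_i\Vert_{Q,2} \leq \tfrac{\varepsilon}{2}\Vert F\Vert_{Q,2}$ and $\Vert f' - g_j'\Vert_{Q,2} \leq \tfrac{\varepsilon}{2}\Vert F'\Vert_{Q,2}$. The triangle inequality in $L^2(Q)$ then yields
\[\Vert h - (g_i + g_j')\Vert_{Q,2} \leq \Vert f - g_i\Vert_{Q,2} + \Vert f' - g_j'\Vert_{Q,2} \leq \tfrac{\varepsilon}{2}\left(\Vert F\Vert_{Q,2} + \Vert F'\Vert_{Q,2}\right).\]

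The only step that requires care --- and the main, if modest, obstacle --- is re-expressing this radius, which is phrased through the two separate envelopes, in terms of the single combined envelope $F + F'$. Here I would exploit that envelopes are nonnegative: $F, F' \geq 0$ gives $F + F' \geq F$ and $F + F' \geq F'$ pointwise, hence $\Vert F + F'\Vert_{Q,2} \geq \Vert F\Vert_{Q,2}$ and $\Vert F + F'\Vert_{Q,2} \geq \Vert F'\Vert_{Q,2}$; adding these two yields $\Vert F\Vert_{Q,2} + \Vert F'\Vert_{Q,2} \leq 2\Vert F + F'\Vert_{Q,2}$. Substituting into the previous display bounds the radius by $\varepsilon\Vert F + F'\Vert_{Q,2}$, so the sums $g_i + g_j'$ indeed cover $\F + \F'$ at that scale. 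Since there are at most $N_1 N_2$ of them, $N(\varepsilon\Vert F + F'\Vert_{Q,2}, \F + \F', \Vert \cdot \Vert_{Q,2}) \leq N_1 N_2$, and taking logarithms gives the stated inequality. Applied after a supremum over all finitely discrete $Q$, this is exactly the form used to control the entropy of $\F_C$ in the proof of \cref{thm:linear} and of $\F_B$ in the proof of \cref{thm:cont}.
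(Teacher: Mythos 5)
Your proof is correct and is the standard product-of-covers argument; the paper itself does not prove this lemma but imports it verbatim from \citet{belloni2018uniformly} (their Lemma L.1), which is established in exactly this way, so there is nothing to compare against beyond noting agreement. One cosmetic point: under the paper's definition a minimal cover's centers need not lie in $\F$ or $\F'$, so you cannot always take $\{g_i\}\subset\F$, but your argument goes through unchanged with external centers, and the key step --- $\Vert F\Vert_{Q,2}+\Vert F'\Vert_{Q,2}\leq 2\Vert F+F'\Vert_{Q,2}$ from nonnegativity of the envelopes --- is exactly right.
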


\begin{lemma}[High-dimensional Gaussian approximation, Corollary 2.1 of \citealp{chernozhukov2021nearly}]\label{lem:gaus}
Let $X_1, \ldots, X_n$ be a sequence of centered independent random vectors in $\RR^p$, and $W = n^{-1/2}\sum_{i=1}^n X_i$, where $\E{WW'} = \Sigma$ with unit diagonal entries. Let $\lambda_{\min} > 0$ denote the smallest eigenvalue of $\Sigma$. Consider the class $\mathcal{A}$ of hyperrectangles in $\RR^p$, i.e., sets of the form $\times_{j=1}^p (a_j, b_j]$ for some $-\infty \leq a_j \leq b_j \leq \infty$, $j = 1, \, \ldots, \, p$. Under the conditions below, we provide a bound on: \[\varrho = \sup_{A \in \mathcal{A}} |\P{W \in A} - \P{Z \in A}|, \qquad Z \sim \mathcal{N}(0, \Sigma).\]

Define three alternative conditions (from most to least restrictive):
\begin{enumerate}[(E.1)]
    \item $|X_{ij}| \leq b_n$ for all $i =1, \ldots, n$ and $j = 1, \, \ldots, \, p$ almost surely.
    \item $\Vert X_{ij} \Vert_{\psi_2} \leq b_n$ for all $i =1, \ldots, n$ and $j = 1, \, \ldots, \, p$, where $\Vert X \Vert_{\psi_2} = \inf \{C > 0 \, : \, \E{\exp(X^2/C^2)} \leq 2\}$.
    \item $\Vert \max_{1\leq j \leq p} |X_{ij}| \Vert_{P,q} \leq b_n$ for all $i =1, \ldots, n$ and some $q \geq 4$.
\end{enumerate}
Define also:
\begin{enumerate}[(M)]
    \item $n^{-1}\sum_{i=1}^n \E{X_{ij}^4} \leq b_n^2$ for all $j = 1, \, \ldots, \, p$.
\end{enumerate}
Then: \begin{enumerate}[(i)]
    \item Under (E.1), we have: \[\varrho \leq \frac{C b_n (\log p)^{3/2} \log n}{\sqrt{n} \lambda_{\min}}\] for an absolute constant $C > 0$.
    \item Under (E.2) and (M), we have: \[\varrho \leq C \left\lbrace \frac{b_n (\log p)^{3/2} \log n}{\sqrt{n} \lambda_{\min}} + \frac{b_n^2 (\log p)^{2}}{\sqrt{n\lambda_{\min}}} \right\rbrace\] for an absolute constant $C > 0$.
    \item Under (E.3) and (M), we have: \[\varrho \leq C \left\lbrace \frac{b_n (\log p)^{3/2} \log n}{\sqrt{n} \lambda_{\min}} + \frac{b_n^2 (\log p)^{2} \log n}{n^{1-2/q} \lambda_{\min}} + \left[\frac{b_n^q (\log d)^{3q/2-4} \log n \log (pn)}{n^{q/2 - 1} (\lambda_{\min})^{q/2}} \right]^\frac{1}{q-2}\right\rbrace\] for a constant $C(q) > 0$ depending only on $q$.
\end{enumerate}
\end{lemma}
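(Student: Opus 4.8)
The plan is to establish this Berry--Esseen-type bound over the class of hyperrectangles $\mathcal{A}$ by combining a smoothing argument, Lindeberg-style replacement (equivalently, Stein's method), and Gaussian anti-concentration; since the statement is quoted from \citet{chernozhukov2021nearly}, I sketch the route that yields the three displayed rates rather than reproduce the full constant-tracking.

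First I would reduce the comparison of $\P{W \in A}$ and $\P{Z \in A}$ over rectangles to a comparison of expectations of smooth test functions. A rectangle event $\{a_j < W_j \le b_j,\ \forall j\}$ can be written as $\{\max_k \phi_k(W) \le 0\}$ for $2p$ affine functions $\phi_k$, so I would replace the hard maximum by the soft-max (log-sum-exp) $F_\beta(x) = \beta^{-1}\log\sum_k e^{\beta \phi_k(x)}$, which satisfies $0 \le F_\beta - \max_k \phi_k \le \beta^{-1}\log(2p)$, and replace the indicator by a thrice continuously differentiable surrogate $g$. The composition $g \circ F_\beta$ is then a smooth test function whose first three derivatives are bounded by explicit powers of $\beta$ together with $\log p$ factors; these derivative bounds are the source of the $(\log p)^{3/2}$ and $(\log p)^2$ powers in the final rate.

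Second, I would compare $\E{g(F_\beta(W))}$ with $\E{g(F_\beta(Z))}$ by Lindeberg replacement: swap the summands $X_i$ for independent Gaussians with matching first and second moments, one index at a time, and Taylor-expand to third order. The zeroth-, first-, and second-order terms cancel by moment matching, leaving a remainder controlled by the third derivatives of $g \circ F_\beta$ (hence powers of $\beta$) times the third absolute moments of the summands. Under the bounded condition (E.1) these moments are $O(b_n)$ directly; under the weaker conditions (E.2) and (E.3) I would first truncate each $X_{ij}$ at a level $u_n$, bound the discarded tail through a Markov/moment inequality using (M) and the $q$-th moment bound, apply the bounded argument to the truncated part, and later optimize over $u_n$. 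It is this truncation step that produces the extra term $[\,\cdot\,]^{1/(q-2)}$ in case (iii).

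Third, I would pay for the smoothing bias using Gaussian anti-concentration: Nazarov's inequality bounds $\sup_x \P{\max_k \phi_k(Z) \in [x, x + \varepsilon]}$ by $C \varepsilon \sqrt{\log p}$ for unit-variance coordinates, and the sharpened version tracking the smallest eigenvalue is what ultimately produces the $\lambda_{\min}$ dependence in the rate. Optimizing the smoothing parameter $\beta$ to balance this bias (of order $\beta^{-1}\log p$) against the Lindeberg remainder (increasing in $\beta$), and then optimizing the truncation level in the moment cases, yields the three bounds (i)--(iii). I expect the main obstacle to be exactly the ``nearly optimal'' improvement of \citet{chernozhukov2021nearly} over earlier Gaussian approximations such as \citet{chernozhukov2014gaussian}: obtaining only a single $\log n$ factor and the explicit $\lambda_{\min}$ scaling requires a delicate iterated Lindeberg/Stein argument with careful constant bookkeeping and a strengthened anti-concentration bound, rather than the cruder one-shot smoothing that gives polynomially worse logarithmic factors.
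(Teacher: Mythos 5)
You should first note that the paper does not actually prove this lemma: it is stated in the appendix of auxiliary results and imported verbatim as Corollary 2.1 of \citet{chernozhukov2021nearly}, so the ``paper's proof'' is a citation and there is nothing internal to compare your argument against. Your sketch is therefore judged on whether it would deliver the stated bounds on its own.

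It would not, and the gap is at exactly the point you flag but do not resolve. The pipeline you describe --- soft-max smoothing of the rectangle indicator, a thrice-differentiable surrogate, one-pass Lindeberg replacement with third-order Taylor remainders, truncation under (E.2)/(E.3), and Nazarov anti-concentration to pay for the smoothing bias --- is the classical Chernozhukov--Chetverikov--Kato machinery, and optimizing the smoothing parameter $\beta$ in that scheme yields error rates of order $n^{-1/6}$ (up to powers of $\log(pn)$), not the $n^{-1/2}\log n$ dependence asserted in parts (i)--(iii). The improvement from $n^{-1/6}$ to nearly $n^{-1/2}$ is not a matter of ``careful constant bookkeeping''; it requires a structurally different argument (in \citet{chernozhukov2021nearly}, an iterative/recursive Gaussian comparison in which a crude preliminary bound on $\varrho$ is fed back into the smoothing--anti-concentration trade-off and sharpened over several rounds, together with an anti-concentration inequality that tracks $\lambda_{\min}$). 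Your proposal names this as ``the main obstacle'' and then stops, so the central quantitative content of the lemma --- the $\sqrt{n}$ denominator, the single $\log n$ factor, and the explicit $\lambda_{\min}^{-1}$ and $\lambda_{\min}^{-q/2}$ scalings --- is left unproved. If your intent is to treat the lemma as an external black box, the honest move is to cite it as the paper does; if your intent is to prove it, you must supply the iteration step rather than gesture at it.
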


\begin{lemma}[Anti-concentration inequality, Theorem 3 and Corollary 1 of \citealp{chernozhukov2014comparison}]\label{lem:antic}
Let $(X_1, \ldots, X_p)$ be a centered Gaussian random vector, with $\sigma = \mathrm{E}[X_j^2]$ for all $j = 1, \ldots, p$. Then, for all $\epsilon > 0$, \[\sup_{x \in \RR} \mathrm{P}(\max_{1\leq j \leq p} X_j \in [x - \epsilon, x + \epsilon]) \leq 12\epsilon \sqrt{\log p} / \sigma.\]
\end{lemma}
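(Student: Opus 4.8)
The plan is to read the left-hand side as the Lévy concentration function of the scalar random variable $M = \max_{1 \leq j \leq p} X_j$ and to control it through a bound on the density of $M$. Whenever $M$ has a Lebesgue density $f_M$ we have $\mathrm{P}(M \in [x-\epsilon, x+\epsilon]) \leq 2\epsilon \sup_t f_M(t)$, so the entire claim reduces to showing $\sup_t f_M(t) \leq 6\sqrt{\log p}/\sigma$. First I would remove the scale: replacing each $X_j$ by $X_j/\sigma$ and $\epsilon$ by $\epsilon/\sigma$ (so that $\sigma$ plays the role of the common standard deviation) reduces matters to a centered Gaussian vector with unit variances and arbitrary correlation, for which it suffices to prove $\sup_t f_M(t) \leq 6\sqrt{\log p}$. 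The only global quantity I expect to use is the classical maximal bound $\mathrm{E}[M] \leq \sqrt{2\log p}$, which is the source of the $\sqrt{\log p}$ factor.

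To bound the density I would smooth the maximum. Introduce the log-sum-exp surrogate $\phi_\beta(x) = \beta^{-1}\log\sum_{j=1}^p e^{\beta x_j}$, which satisfies $M \leq \phi_\beta(X) \leq M + \beta^{-1}\log p$, has gradient $\nabla\phi_\beta(X)$ in the probability simplex (so $\|\nabla\phi_\beta\|_1 = 1$), and Hessian $\nabla^2\phi_\beta = \beta(\mathrm{diag}(\pi) - \pi\pi^{\top})$ of operator norm at most $\beta$. Since $|\phi_\beta(X) - M| \leq \beta^{-1}\log p$, every interval event for $M$ sits inside an interval event for $U = \phi_\beta(X)$ of length $2\epsilon + \beta^{-1}\log p$, so it is enough to bound the density of the smooth functional $U$. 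Here I would use Gaussian integration by parts: for $X \sim \mathcal{N}(0,\Sigma)$ the identity $\mathrm{E}[\nabla g(X)] = \mathrm{E}[g(X)\,\Sigma^{-1}X]$ lets me express the distributional derivative of $t \mapsto \mathrm{P}(U \leq t)$ and bound $\sup_t f_U(t)$ by a multiple of $\mathrm{E}[M] + 1 \lesssim \sqrt{\log p}$, up to a correction governed by the Hessian norm $\beta$.

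Finally I would optimize the smoothing parameter. The inflation of the target interval by $\beta^{-1}\log p$ favors large $\beta$, whereas the Hessian correction in the density bound favors small $\beta$; balancing the two at $\beta \asymp \sqrt{\log p}$ gives $\sup_x \mathrm{P}(M \in [x-\epsilon, x+\epsilon]) \leq C\epsilon\sqrt{\log p}$, and undoing the rescaling restores the $1/\sigma$ while tracking the absolute constants delivers the stated factor $12$. The step I expect to be the main obstacle is the density estimate for $U$ in the second paragraph, since keeping $\sup_t f_U(t)$ uniformly bounded with an explicit dependence on $\beta$ is where the Gaussian structure and the $\sqrt{\log p}$ scale genuinely interact. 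A self-contained alternative that avoids the surrogate is to start from the exact order-statistic identity $f_M(x) = \sum_{j=1}^p f_{X_j}(x)\,\mathrm{P}(X_k \leq x \ \forall k \neq j \mid X_j = x)$ and to bound the summed conditional survival probabilities directly; this dispenses with $\phi_\beta$ but replaces it with delicate control of the Gaussian conditional laws, which is the same difficulty in a different form.
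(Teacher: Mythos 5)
The paper offers no proof of this lemma at all: it is imported verbatim (as Theorem 3 and Corollary 1) from Chernozhukov, Chetverikov and Kato, so the only meaningful comparison is with the argument in that reference, which is your deferred ``self-contained alternative,'' not your main route. Your main route has a structural flaw at the balancing step. Since $M \le \phi_\beta(X) \le M + \beta^{-1}\log p$, the containment you use gives $\mathrm{P}(M\in[x-\epsilon,x+\epsilon]) \le \left(2\epsilon + \beta^{-1}\log p\right)\sup_t f_U(t)$, and the inflation $\beta^{-1}\log p$ is \emph{additive}: it does not scale with $\epsilon$. With $\beta \asymp \sqrt{\log p}$ the inflation is itself of order $\sqrt{\log p}$, so even a perfect density bound $\sup_t f_U(t)\lesssim \sqrt{\log p}$ only yields $\mathrm{P}(M\in[x-\epsilon,x+\epsilon]) \lesssim \epsilon\sqrt{\log p} + \log p$, which is vacuous as $\epsilon\to 0$. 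Making the inflation comparable to $\epsilon$ forces $\beta \gtrsim \epsilon^{-1}\log p$, at which point the Hessian correction you invoke blows up; the two requirements are incompatible for small $\epsilon$ unless the density bound for $U$ is uniform in $\beta$ --- and in that case the smoothing buys nothing and you are back to bounding $f_M$ directly.

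The density estimate itself is also not delivered by the integration-by-parts identity you cite. The natural implementation exploits $\sum_j \partial_j \phi_\beta \equiv 1$ to get $\mathrm{E}[\psi'(U)] = \mathrm{E}[\psi(U)\,\mathbf{1}^{\top}\Sigma^{-1}X]$, which bounds $\sup_t f_U(t)$ by $\mathrm{E}\,|\mathbf{1}^{\top}\Sigma^{-1}X| = \sqrt{2/\pi}\,(\mathbf{1}^{\top}\Sigma^{-1}\mathbf{1})^{1/2}$ --- of order $\sqrt{p}$, not $\sqrt{\log p}$, already for independent coordinates --- and it requires $\Sigma$ invertible, which the lemma does not assume. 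Nothing in the sketch supplies the mechanism that replaces $\sqrt{p}$ by $\mathrm{E}[M]+1\lesssim\sqrt{\log p}$; that mechanism is precisely your alternative. With unit variances, $f_M(x)=\phi(x)G(x)$ where $G(x)=\sum_j \mathrm{P}(X_k\le x\ \forall k\ne j\mid X_j=x)$, and $G$ is non-decreasing: conditionally on $X_j=x$ the event is $\{Z_k\le x(1-\rho_{jk})\ \forall k\ne j\}$ for a Gaussian vector $(Z_k)$ whose law does not depend on $x$, with thresholds non-decreasing in $x$ since $\rho_{jk}\le 1$. Monotonicity gives $G(x)\,\bar\Phi(x)\le\int_x^{\infty}\phi(t)G(t)\,dt=\mathrm{P}(M>x)$, hence the hazard-rate bound $f_M(x)\le \bigl(\phi(x)/\bar\Phi(x)\bigr)\mathrm{P}(M>x)$, which combined with Gaussian concentration of $M$ about its mean yields the $(\mathrm{E}[M]+1)/\sigma$ bound of the cited Theorem 3; Corollary 1 then substitutes $\mathrm{E}[M]\le\sigma\sqrt{2\log p}$ to obtain the constant $12$ (for $p\ge 2$). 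You should promote that argument from an afterthought to the proof and drop the log-sum-exp route.
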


\begin{lemma}[Gaussian approximation to suprema of empirical processes, Corollary 2.2 of \citealp{chernozhukov2014gaussian}]\label{lem:gausemp1}
Let $\F$ be a set of suitably measurable functions $f: \mathcal{W} \rightarrow \RR$, equipped with a measurable envelope $F: \mathcal{W} \rightarrow \RR$, $F \geq \sup_{f\in\F}|f|$. Let $\sigma^2$ be any positive constant such that $\sup_{f\in\F} \Vert f \Vert_{P,2}^2 \leq \sigma^2 \leq \Vert F \Vert_{P,2}^2$.  Assume that, for some $ b \geq \sigma$ and $q \geq 4$, we have $\sup_{f \in \F} \E[p]{|f|^k} \leq \sigma^2 b^{k-2}$ for $k = 2, 3, 4$, and $\Vert F \Vert_{P,q} \leq b$. Suppose that there exist constants $a \geq e$ and $v \geq 1$ such that:
\begin{align*}
   \log \sup_Q N(\varepsilon \Vert F \Vert_{Q,2}, \F, \Vert \cdot \Vert_{Q,2}) \leq v \log(a/\varepsilon), \qquad \text{for all } 0 < \varepsilon \leq 1.
\end{align*}
Let $Z = \sup_{f \in \F} \Gn f$. Then, for every $\gamma \in (0,1)$ there exists a random variable $\widetilde{Z} \overset{d}{=} \sup_{f \in \F} G_P f$ such that \begin{align*}
    \PP{|Z - \widetilde{Z}| > \frac{bL_n}{\gamma^{1/2} n^{1/2 - 1/q}} + \frac{(b\sigma)^{1/2}L_n^{3/4}}{\gamma^{1/2} n^{1/4}} + \frac{(b\sigma^2L_n^2)^{1/3}}{\gamma^{1/3} n^{1/6}}} \leq C\left(\gamma + \frac{\log n}{n}\right),
\end{align*} where $L_n = Bv(\log n \vee \log(ab/\sigma))$ and $B, C$ are constants that depend only on $q$. Here, $G_P$ denotes a tight mean-zero Gaussian process with covariance function $\E{G_P(f)G_P(g)} = \E{f(W)g(W)}$ for all $f, g \in \F$.
\end{lemma}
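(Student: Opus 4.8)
The plan is to reduce the infinite-dimensional supremum $Z=\sup_{f\in\F}\Gn f$ to a finite-dimensional maximum over a well-chosen net, couple that maximum with a Gaussian vector, and then transfer the coupling back to the supremum of the Gaussian process $G_P$. The auxiliary results stated above supply most of the ingredients: \cref{lem:max} controls the fluctuation of $\Gn$ over small balls (the discretization error), \cref{lem:entr} bounds the entropy of the difference class generated by the net, \cref{lem:antic} both controls the oscillation of $G_P$ and converts distributional closeness of maxima into almost-sure closeness, and \cref{lem:gaus} is suggestive of the finite-dimensional coupling, though, as noted below, it must be replaced by a degeneracy-robust coupling for the maximum.

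First I would discretize. For a mesh $\delta\in(0,1]$, the VC-type entropy bound yields a $\delta\Vert F\Vert_{Q,2}$-net $\F_\delta\subset\F$, uniformly over finitely discrete $Q$, with $\log|\F_\delta|\le v\log(a/\delta)$. Writing $Z=\max_{f\in\F_\delta}\Gn f+\big(\sup_{f\in\F}\Gn f-\max_{f\in\F_\delta}\Gn f\big)$, the remainder is dominated by $\sup\{|\Gn(f-g)|:\Vert f-g\Vert_{P,2}\le C\delta b\}$. Applying \cref{lem:max} to this peeled difference class, whose entropy is controlled through \cref{lem:entr} and whose $L^2$-radius is $O(\delta b)$, bounds the empirical discretization error with probability at least $1-c/\log n$ by a quantity of order $\delta b\sqrt{v\log a}+v n^{1/q-1/2}b\log a$. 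An analogous modulus-of-continuity bound for $G_P$, obtained from Dudley's entropy integral together with the anti-concentration estimate \cref{lem:antic}, controls the Gaussian-side discretization error $\sup_{f\in\F}G_P f-\max_{f\in\F_\delta}G_P f$.

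Next I would couple the finite maximum with a Gaussian vector. Viewing $(\Gn f)_{f\in\F_\delta}$ as a normalized sum of i.i.d.\ centered vectors in $\RR^{|\F_\delta|}$ whose covariance matches that of $(G_P f)_{f\in\F_\delta}$, one wants $\max_{f\in\F_\delta}\Gn f$ close to $\max_{f\in\F_\delta}G_P f$. A naive appeal to \cref{lem:gaus} is not quite legitimate here, since nearby net points are almost collinear and the covariance is nearly degenerate (its smallest eigenvalue need not be bounded below); the correct device is a Gaussian coupling for the maximum itself, insensitive to degeneracy, obtained by Slepian-type smoothing and iterative coupling, with \cref{lem:antic} supplying the anti-concentration needed to pass from smoothed to hard maxima. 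This yields, in terms of $\log|\F_\delta|=v\log(a/\delta)$ and the moments $b,\sigma$, the three characteristic terms whose $n$-exponents $\tfrac12-\tfrac1q$, $\tfrac14$ and $\tfrac16$ come respectively from the $q$-th-moment envelope, the third-moment, and the fourth-moment/variance contributions, all scaled by $L_n=Bv(\log n\vee\log(ab/\sigma))$.

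Finally I would optimize the mesh $\delta$ and a truncation level for the heavy-tailed envelope, then assemble. Because $F$ has only $q$ finite moments, the coupling is run after splitting on $\{F\le\tau\}$ versus $\{F>\tau\}$, and the discarded tail, controlled by Markov's inequality through the free parameter $\gamma\in(0,1)$, is precisely what produces the failure probability $C(\gamma+\log n/n)$; realizing $Z$ and $\widetilde Z$ on a common probability space then requires a Strassen-type transfer. The main obstacle is exactly this maximum-coupling step under a nearly degenerate covariance and heavy tails: it is what forces the delicate Slepian smoothing and truncation accounting, and what must be balanced against the comparatively routine discretization bounds from \cref{lem:max,lem:entr,lem:antic} to preserve the sharp exponents. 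This is why the result is imported wholesale from \citet{chernozhukov2014gaussian} rather than reassembled from the finite-dimensional tools.
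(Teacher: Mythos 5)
The paper offers no proof of this lemma: it is imported verbatim as Corollary 2.2 of \citet{chernozhukov2014gaussian}, so there is no internal argument to compare yours against. As a reconstruction of the strategy in that reference, your sketch is essentially accurate --- discretization by a net whose cardinality is controlled by the uniform entropy condition, a degeneracy-robust Gaussian coupling for the finite maximum built from Slepian--Stein smoothing plus a Strassen-type transfer to a common probability space, truncation of the envelope to handle the $q$-moment tail (which is indeed the source of the $\gamma$ in the failure probability), and modulus-of-continuity bounds on both the empirical and Gaussian sides. You also correctly diagnose the central obstruction: \cref{lem:gaus} cannot supply the coupling step because the covariance matrix of the net points is nearly singular, so a coupling for the maximum that is insensitive to the smallest eigenvalue is required.

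Two caveats if one tried to literally assemble the argument from the auxiliary lemmas collected in this appendix, as your sketch partly suggests. First, \cref{lem:max} only delivers a failure probability of $c/\log n$, which is incompatible with the target bound $C(\gamma + \log n / n)$; the cited proof controls the discretization error with a deviation inequality of Fuk--Nagaev/Talagrand type whose tail is tied to $\gamma$ and $\log n / n$, not to $1/\log n$. Second, \cref{lem:antic} as stated assumes a common variance across coordinates, which fails for the net points of $\F$, whose $L^2(P)$ norms are merely bounded above by $\sigma$; the non-homogeneous form of the anti-concentration inequality is needed. Neither point undermines your conclusion --- which matches the paper's implicit position --- that the lemma is properly cited rather than rederived, but they confirm that the toolbox in this appendix is not sufficient to reproduce it.
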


\begin{lemma}[Gaussian approximation to suprema of empirical processes, Kolmogorov distance version, Lemma 2.3 of \citealp{chernozhukov2014gaussian}]\label{lem:gausemp2}
Under the same conditions as the previous lemma, suppose that there exist constants such that $c_0 \leq \Vert f \Vert_{P,2} \leq C_0$ for all $f \in \F$. For a random variable $Z$, suppose that there exist constants $r_1, r_2 > 0$ and $\widetilde{Z} \overset{d}{=} \sup_{f \in \F} G_P f$ such that $\PP{|Z - \widetilde{Z}| > r_1} \leq r_2$. Then, \[\sup_{t}\left|\PP{Z \leq t} - \P{\widetilde{Z} \leq t}\right| \leq \kappa r_1 \left(\mathrm{E}[\widetilde{Z}] + \sqrt{1 \vee \log(c_0 / r_1)}\right) + r_2,\] where $\kappa$ is a constant depending only on $c_0$ and $C_0$.
\end{lemma}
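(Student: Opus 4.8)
The plan is to combine the high-probability coupling $\PP{|Z - \widetilde{Z}| > r_1} \leq r_2$ with an anti-concentration inequality for the supremum of the Gaussian process $G_P$: the coupling forces the laws of $Z$ and $\widetilde{Z}$ to agree except possibly inside a window of width $2r_1$, and anti-concentration shows that $\widetilde{Z}$ places little mass on any such window. Concretely, I would fix $t \in \RR$ and work on the event $\{|Z - \widetilde{Z}| \leq r_1\}$, which has probability at least $1 - r_2$. On this event $\{Z \leq t\} \subseteq \{\widetilde{Z} \leq t + r_1\}$ and $\{\widetilde{Z} \leq t - r_1\} \subseteq \{Z \leq t\}$, so that \[\P{\widetilde{Z} \leq t - r_1} - r_2 \leq \PP{Z \leq t} \leq \P{\widetilde{Z} \leq t + r_1} + r_2.\] Subtracting $\P{\widetilde{Z} \leq t}$ throughout yields \[\left|\PP{Z \leq t} - \P{\widetilde{Z} \leq t}\right| \leq \P{t - r_1 < \widetilde{Z} \leq t + r_1} + r_2 \leq \sup_{x \in \RR} \P{\widetilde{Z} \in [x - r_1, x + r_1]} + r_2,\] and taking the supremum over $t$ reduces the claim to controlling the Lévy concentration function of $\widetilde{Z} = \sup_{f \in \F} G_P f$ at scale $r_1$.

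The next step is to invoke the anti-concentration inequality for suprema of separable Gaussian processes --- the continuum analogue of the finite-dimensional \cref{lem:antic}, from the same line of work (\citealp{chernozhukov2014gaussian, chernozhukov2014comparison}). Since the covariance structure of $G_P$ inherited from the previous lemma gives $\mathrm{Var}(G_P f) = \Vert f \Vert_{P,2}^2$, the hypothesis $c_0 \leq \Vert f \Vert_{P,2} \leq C_0$ keeps the pointwise standard deviations of the process bounded away from $0$ and $\infty$. That inequality then supplies a constant $\kappa$ depending only on $c_0, C_0$ such that \[\sup_{x \in \RR} \P{\widetilde{Z} \in [x - \epsilon, x + \epsilon]} \leq \kappa \, \epsilon \left(\mathrm{E}[\widetilde{Z}] + \sqrt{1 \vee \log(c_0/\epsilon)}\right) \quad \text{for all } \epsilon > 0.\] Evaluating at $\epsilon = r_1$ and substituting into the display above gives exactly $\sup_t|\PP{Z \leq t} - \P{\widetilde{Z} \leq t}| \leq \kappa r_1(\mathrm{E}[\widetilde{Z}] + \sqrt{1 \vee \log(c_0/r_1)}) + r_2$, as claimed.

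The main obstacle is this Gaussian anti-concentration estimate, in particular the $\sqrt{1 \vee \log(c_0/r_1)}$ refinement. In the continuum one cannot union-bound over coordinates as in \cref{lem:antic}; instead the bound rests on Gaussian concentration of measure applied to $\widetilde{Z}$, which is almost surely finite with finite mean because the entropy condition carried over from the previous lemma guarantees $G_P$ admits a tight, sample-bounded version, together with the lower bound $c_0$ on the pointwise standard deviations, which prevents the supremum from degenerating to a point mass. Granting that estimate as a black box, the remaining coupling manipulation and substitution are elementary.
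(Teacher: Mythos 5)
The paper does not actually prove this lemma --- it is stated in the auxiliary-results appendix as an imported result (Lemma~2.3 of Chernozhukov, Chetverikov and Kato, 2014) with only a citation. Your argument correctly reconstructs the proof from that source and is the intended one: the coupling reduces the Kolmogorov distance to the L\'evy concentration function of $\widetilde{Z}$ at scale $r_1$ plus $r_2$, and the anti-concentration inequality for suprema of separable Gaussian processes (valid here because the pointwise standard deviations $\Vert f \Vert_{P,2}$ are pinned between $c_0$ and $C_0$, which is where $\kappa$'s dependence on those constants enters) yields exactly the factor $\kappa r_1\left(\mathrm{E}[\widetilde{Z}] + \sqrt{1 \vee \log(c_0/r_1)}\right)$.
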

\end{appendices}
\end{document}